\newtheorem{thm}{Theorem}
\newtheorem{lem}{Lemma}
\newtheorem{cor}{Corollary}
\newtheorem{prb}{Problem}
\newcommand{\R}{\mathbb{R}}
\newcommand{\eps}{\varepsilon}
\newcommand{\CP}{\mathord{\it CP}}
\newcommand{\NW}{\mathord{\it NW}}
\DeclareMathOperator{\diam}{diam}
\newcommand{\DIAM}{\diam}
\DeclareMathOperator{\polylog}{polylog}
\title{Closest Pair Queries in Vertical Slabs and Tight 
Bounds on the Number of Possible Answers\footnote{A preliminary version of this paper appeared in the \emph{Proceedings of the 19th International Symposium on Algorithms and Data Structures (Toronto, ON, 2025)}, LIPIcs~349, 
\href{https://doi.org/10.4230/LIPIcs.WADS.2025.8}{doi:10.4230/LIPIcs.WADS.2025.8}.} 
}
\author{
Ahmad Biniaz\thanks{School of Computer Science, University of Windsor, 
     Windsor, Canada. Email: abiniaz@uwindsor.ca.
     Research supported by NSERC.} 
\and
Prosenjit Bose\thanks{School of Computer Science, Carleton University, 
      Ottawa, Canada. 
      Email: \{jit,anil,michiel\}@scs.carleton.ca. 
      Research supported by NSERC.} 
\and
Chaeyoon Chung\thanks{Department of Computer Science and Engineering, 
     Pohang University of Science and Technology, Pohang, Republic of 
     Korea. Email: chaeyoon17@postech.ac.kr. 
     Research supported by Institute of Information \& Communications 
     Technology Planning \& Evaluation (IITP) grant funded by the Korea 
     government (MSIT).} 
\and
Jean-Lou De Carufel\thanks{School of Electrical Engineering and Computer 
      Science, University of Ottawa, Ottawa, Canada. 
      Email: jdecaruf@uottawa.ca, saeedodak@gmail.com. Research supported 
      by NSERC.} 
\and 
John Iacono\thanks{Department of Computer Science, Universit{\'e} libre 
        de Bruxelles, Brussels, Belgium. 
        Email: john@johniacono.com. This work was supported by the 
        Fonds de la Recherche Scientifique-FNRS.} 
\and
Anil Maheshwari\footnotemark[3] 
\and 
Saeed Odak\footnotemark[5] 
\and 
Michiel Smid\footnotemark[3] 
\and 
Csaba D. T\'{o}th\thanks{Department of Mathematics, California State 
    University Northridge, Los Angeles, CA, and Department of Computer 
    Science, Tufts University, Medford, MA, USA. 
    Email: csaba.toth@csun.edu. Research supported in 
    part by the NSF award DMS-2154347.}
}
\date{\today} 
\begin{document} 

\maketitle 

\begin{abstract}
Let $S$ be a set of $n$ points in $\R^d$, where $d \geq 2$ is a constant,
and let $H_1,H_2,\ldots,H_{m+1}$ be a sequence of vertical hyperplanes 
that are sorted by their first coordinates, such that exactly 
$n/m$ points of $S$ are between any two successive hyperplanes. 
Let $A(S,m)$ be the set of different closest pairs in the 
${{m+1} \choose 2}$ vertical slabs that are bounded by $H_i$ and $H_j$, 
over all $1 \leq i < j \leq m+1$. We prove tight bounds for the largest
possible size of $A(S,m)$, over all point sets of size $n$, and for 
all values of $1 \leq m \leq n$.

As a result of these bounds, we obtain, for any constant $\eps>0$, 
a data structure of size $O(n)$,
such that for any vertical query slab $Q$, the closest pair in the 
set $Q \cap S$ can be reported in $O(n^{1/2+\eps})$ time. Prior to this 
work, no linear space data structure with sublinear query time was 
known.
\end{abstract}

{\bf Keywords:} computational geometry, closest pair, vertical slab, 
data structure

\section{Introduction}
Throughout this paper, we consider point sets in $\R^d$, where the
dimension $d$ is an integer constant. For any real number $a$, we define
the \emph{vertical hyperplane} $H_{a}$ to be the set
\[ H_{a} = \{ (x_1,x_2,\ldots,x_d) \in \R^d : x_1 = a \} .
\]
Note that this is a hyperplane with normal vector $(1,0,0,\ldots,0)$. 
For any two real numbers $a$ and $b$ with $a<b$, we define the
\emph{vertical slab} $\llbracket H_a, H_b \rrbracket$ to be the set
\[ \llbracket H_a, H_b \rrbracket = 
    \{ (x_1,x_2,\ldots,x_d) \in \R^d : a \leq x_1 \leq b \} . 
\]
Let $S$ be a set of $n$ points in $\R^d$, in which no two points have
the same first coordinate and all ${n} \choose {2}$ pairwise Euclidean
distances are distinct.

For any two real numbers $a$ and $b$ with $a<b$, we define
$\CP(S,H_a,H_b)$ to be the closest-pair among all points in the set
$\llbracket H_a, H_b \rrbracket \cap S$, i.e., all points of $S$
that are in the vertical slab $\llbracket H_a, H_b \rrbracket$.
If $\llbracket H_a, H_b \rrbracket \cap S$ has size at most one, then
$\CP(S,H_a,H_b) = \infty$.

Clearly, there are $\Theta(n^2)$ 
combinatorially 
different\footnote{The slabs $\llbracket H_a, H_b \rrbracket$ and $\llbracket H_{a'}, H_{b'} \rrbracket$ are combinatorially different 
with respect to $S$ if their intersections with $S$ are different.}
sets of
the form $\llbracket H_a, H_b \rrbracket \cap S$.
Sharathkumar and Gupta~\cite{sharathkumar2007range} have shown that, for $d=2$, the size of the set
\[ \{ \CP(S,H_a,H_b) : a < b \} 
\]
is only $O(n \log n)$. That is, even though there are $\Theta(n^2)$
combinatorially different vertical slabs with respect to $S$, the number of different closest
pairs in these slabs is only $O(n \log n)$.

In this paper, we generalize this result to the case when the
dimension $d$ can be any constant and the slabs
$\llbracket H_a, H_b \rrbracket$ come from a restricted set.

Let $m$ be an integer with $1 \leq m \leq n$, and let
$a_1 < a_2 < \cdots < a_{m+1}$ be real numbers such that
for each $i = 1,2,\ldots,m$, there are 
exactly\footnote{In order to avoid floors and ceilings, we assume for simplicity that $n$ is a multiple of $m$.}
$n/m$ points of $S$ 
in the interior of the vertical slab
$\llbracket H_{a_i}, H_{a_{i+1}} \rrbracket$.
Observe that this implies that all points in $S$ are in the interior of the vertical slab $\llbracket H_{a_1}, H_{a_{m+1}} \rrbracket$.

We define
\[ A(S,m) = \{ \CP(S,H_{a_i},H_{a_j}) : 1 \leq i < j \leq m+1 \} . 
\]
In other words, we consider all ${m+1} \choose 2$ slabs bounded by vertical hyperplanes whose
first coordinates belong to $\{ a_1,a_2,\ldots,a_{m+1} \}$. For each such slab, we consider the closest pair among all points of $S$ inside the slab. 
The set $A(S,m)$ consists of all \emph{different} closest pairs.
Finally, we define
\[ f_d(n,m) = \max \{ |A(S,m)| : |S|=n \} . 
\]
Using this notation, Sharathkumar and Gupta~\cite{sharathkumar2007range} have shown that
$f_2(n,n) = O(n \log n)$.

In dimension $d=1$, it is easy to see that $f_1(n,m) = \Theta(m)$. Our main results are the
following tight bounds on $f_d(n,m)$, for any constant $d \geq 2$ and
any $m$ with $1 \leq m \leq n$:

\begin{thm}
\label{thm-main1}
Let $d \geq 2$ be a constant, and let $m$ and $n$ be integers such that
$1 \leq m \leq n$.
\begin{enumerate}
\item If $m = O(\sqrt{n})$, then $f_d(n,m) = \Theta(m^2)$.
\item If $m = \omega(\sqrt{n})$, then
$f_d(n,m) = \Theta(n \log (m^2/n))$.
\item In particular, if $m=n$, then $f_d(n,m) = \Theta(n \log n)$.
\end{enumerate}
\end{thm}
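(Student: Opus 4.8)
The plan is to turn $|A(S,m)|$ into a purely combinatorial quantity about ``column ranges'', prove the upper bounds by a divide‑and‑conquer over the columns whose only non‑trivial ingredient is a \emph{linear} bound on the number of bichromatic closest pairs across a cut, and prove the lower bounds by explicit recursive constructions. For $r\in S$ let $c(r)\in\{1,\dots,m\}$ be the index with $r$ strictly inside $\llbracket H_{a_{c(r)}},H_{a_{c(r)+1}}\rrbracket$ (the \emph{column} of $r$), and for $1\le i\le j\le m$ let $\CP[i,j]$ be the closest pair among the points in columns $i,\dots,j$. If $(p,q)=\CP(S,H_{a_i},H_{a_j})$ with $c(p)\le c(q)$ then $p,q$ lie in columns $i,\dots,j-1$, and since shrinking a slab cannot decrease its closest‑pair distance, $(p,q)$ is also the closest pair of the sub‑slab spanned exactly by columns $c(p),\dots,c(q)$; the converse is immediate. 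Hence $|A(S,m)|$ is the number of distinct values of $\CP[i,j]$ over all $1\le i\le j\le m$. In particular $|A(S,m)|\le\binom{m+1}{2}$, which is the upper bound $f_d(n,m)=O(m^2)$ of part~1, and part~3 is the case $m=n$ of part~2 since $n\log(n/\sqrt n)=\tfrac12 n\log n$.

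\emph{Upper bound of part 2.} For $a\le b$ let $g(a,b)$ count the distinct values of $\CP[i,j]$ over $a\le i\le j\le b$, so $f_d(n,m)=g(1,m)$, and set $\mu=\lfloor(a+b)/2\rfloor$. An interval $[i,j]\subseteq[a,b]$ is inside $[a,\mu]$, inside $[\mu{+}1,b]$, or \emph{crossing} ($i\le\mu<j$); and the closest pair of a crossing interval is counted in one of the two halves when both its endpoints are on the same side of column $\mu$, and is \emph{bichromatic} otherwise. Writing $B(a,b)$ for the number of distinct bichromatic closest pairs over crossing intervals, we get $g(a,b)\le g(a,\mu)+g(\mu{+}1,b)+B(a,b)$ together with the trivial bound $g(a,b)\le\binom{b-a+2}{2}$. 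Granting the Key Lemma below that $B$ of a node is $O_d$ of the number of points it contains, unroll the recurrence for $T:=\lfloor\log_2(m^2/n)\rfloor$ levels and bound the $2^T$ leaves (each spanning $n/m$ columns, hence $O((n/m)^2)$ points) by the trivial bound: level $\ell$ contributes $2^\ell\cdot O_d\!\big((m/2^\ell)(n/m)\big)=O_d(n)$ and the leaves contribute $2^T(m/2^T)^2=n$, for a total of $O_d\!\big(n(T{+}1)\big)=O_d\!\big(n\log(m/\sqrt n)\big)$ (here $O_d$ hides a constant depending only on $d$). When $m=O(\sqrt n)$ one just uses $g(1,m)\le\binom{m+2}{2}$.

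\emph{The Key Lemma (main obstacle).} The crux is: for a node spanning columns $[a,b]$ with $p$ points, $B(a,b)=O_d(p)$. The proof is geometric. If $(u,v)$ is a bichromatic closest pair with $u$ in column $\alpha\le\mu$, $v$ in column $\beta>\mu$, and $\delta=d(u,v)$, then by the reformulation all $(\beta-\alpha+1)n/m$ points in columns $\alpha,\dots,\beta$ are at pairwise distance $>\delta$ except for $(u,v)$, and $u,v$ lie within $\delta$ of the hyperplane $H_{a_{\mu+1}}$; combined with $\delta\ge|x_1(v)-x_1(u)|$ this forces, unless $\beta=\alpha+1$ (at most one such pair), that at least $n/m$ of these $\delta$‑separated points lie in an $x_1$‑slab of width $O(\delta)$ abutting the cut. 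Since two points of a $\delta$‑separated set whose last $d-1$ coordinates are within $\delta/2$ must differ in $x_1$ by more than $\tfrac{\sqrt3}{2}\delta$, at most two of them share any $x_1$‑interval of length $\delta$, so these $\Omega(n/m)$ points project to a $(d-1)$‑dimensional set with $O(1)$ points per ball of radius $\delta/4$. Grouping the bichromatic pairs by dyadic scales of $\delta$ and charging each to a bounded number of node points that it forces to be $\delta$‑separated near the cut, a packing argument in $\R^{d-1}$ bounds the pairs scale by scale and, after summation, yields $B(a,b)=O_d(p)$. This is the step that uses $d=O(1)$, and the one I expect to demand the most care.

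\emph{Lower bounds.} For part~1 assume $m\le\sqrt n$ (the range $m=\Theta(\sqrt n)$ is handled the same way using only the $\Theta(m^2)$ column intervals of length $O(\sqrt n)$). Put all points in a very short $x_1$‑interval so distances are governed by the remaining $d-1\ge1$ coordinates, assign $x_1$ so that each column is a prescribed set, and for every interval $[i,j]$ with $i<j$ place a two‑point cluster $\{u_{ij},v_{ij}\}$ with $u_{ij}$ in column $i$, $v_{ij}$ in column $j$, at distance $\delta_{ij}$ chosen strictly decreasing under interval containment (say $\delta_{ij}=2^{-(j-i)}$ with tiny generic perturbations), clusters mutually far apart in the last $d-1$ coordinates, padding each column up to $n/m$ points with far‑away dummies; then $\CP[i,j]=(u_{ij},v_{ij})$ for every $[i,j]$, so $|A(S,m)|\ge\binom{m}{2}=\Omega(m^2)$. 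For part~2 we build a self‑similar configuration mirroring the divide‑and‑conquer: at a node with $w$ columns and $p=w\cdot n/m$ points (whenever $w/2\ge n/m$), route via a bipartite circulant the $n/m$ points of each left column to distinct right columns so that each right column also receives $n/m$ of them and no two routed pairs span the same interval; give these pairs distances that strictly decrease under span‑containment and are smaller than everything created deeper in the recursion, making all $\Omega(p)$ of them distinct closest pairs at that node. Recursing into both halves and separating the distance scales of different recursion levels, the recursion has depth $\Theta(\log(m/\sqrt n))$ with $\Theta(n)$ new distinct closest pairs per level on disjoint point sets, so $|A(S,m)|=\Omega\!\big(n\log(m/\sqrt n)\big)$; part~3 is the case $m=n$.
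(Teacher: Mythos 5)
Your overall architecture mirrors the paper's: reformulate $|A(S,m)|$ over column intervals, divide and conquer with a linear bound on crossing closest pairs, and explicit constructions for the lower bounds. The recursion, its unrolling, the trivial $O(m^2)$ bound, and the part-1 lower bound (one two-point cluster per interval, with $n\geq m(m-1)$ points available) are all sound and essentially match the paper. But the two places you flag as requiring care are exactly where the proof lives, and both sketches have genuine gaps. The Key Lemma is the serious one. Your per-scale packing shows that, at a fixed dyadic scale $\delta$, the left endpoints of crossing closest pairs are pairwise more than $\delta$ apart and each is an endpoint of $O_d(1)$ pairs at that scale — fine, but this only yields $O_d(p)$ \emph{per scale}, and the number of relevant scales is governed by the spread of the point set, not by $p$. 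A single point $u$ left of the cut can be the left endpoint of crossing closest pairs $(u,v_1),(u,v_2),\dots,(u,v_k)$ for nested intervals with geometrically decreasing distances $\delta_1>\delta_2>\cdots$, one per scale, so $k$ can be $\Theta(\log\Phi)$; summing your per-scale bound then gives $O(p\log\Phi)$, not $O(p)$. Your proposed rescue — charging each pair to the column of $n/m$ points it forces to be $\delta$-separated in a width-$\delta$ slab at the cut — does not close this, because the \emph{same} column (necessarily column $\mu$ or $\mu+1$) absorbs the charge at every scale simultaneously: being $\delta$-separated implies being $\delta'$-separated for all $\delta'<\delta$, so the charge is not disjoint across scales. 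The paper gets the linear bound by a genuinely different mechanism: a well-separated pair decomposition with separation $s>2$ and $O(n)$ pairs, in which each crossing closest pair is forced to be the edge joining the extreme points of some WSPD pair on either side of the cut (or, for $d=2$, an acyclicity argument on the positive-slope crossing segments). You need one of these, or a replacement for your charging scheme; the dyadic packing alone cannot deliver $O_d(p)$.

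The part-2 lower bound is also not yet a construction. You specify a combinatorial routing (a circulant matching per recursion node, distances decreasing under span-containment, level scales separated) and assert that point sets realizing it exist, but realizability is the entire difficulty: every point participates in one designated pair per level of its root-to-leaf path, so the triangle inequality ties together the distances you are trying to prescribe independently (e.g., if $u$ is at distance $\delta_{\mathrm{top}}$ from $v$ and $\delta_2>\delta_{\mathrm{top}}$ from $v'$, then $\|v-v'\|<2\delta_2$, and you must verify this does not undercut the designated closest pair of the slab spanned by $v'$ and $v$), and simultaneously all non-designated distances inside every relevant slab must exceed the designated one. The paper resolves this with a concrete hypercube-set construction, $S=\{\sum_i\beta_iv_i\}$ with $v_i=(2^i,y_i)$ and $\|v_i\|>2\sum_{j>i}\|v_j\|$, where the "smallest differing index" argument verifies exactly the inequalities your sketch takes for granted (and scaled translated copies of it handle general $m$). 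As it stands, your part-2 lower bound is a specification of the desired output rather than a proof that it is achievable.
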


\subsection{Motivation and related work}

In the \emph{range closest pair problem}, we have to store a given set
$S$ of $n$ points in $\R^d$ in a data structure such that queries of
the following type can be answered: Given a query range $R$ in $\R^d$,
report the closest pair among all points in the set $R \cap S$.

Many results are known for different classes of query ranges. We 
mention some of the currently best data structures. 
Xue \emph{et al.}~\cite{DBLP:journals/dcg/XueLRJ22} present data 
structures for the case when $d=2$ and the query ranges are quadrants, 
halfplanes, or axes-parallel rectangles. Again for the case when $d=2$, 
data structures for query regions that are translates of a fixed shape 
are given by Xue \emph{et al.}~\cite{translate}. Some results in any 
constant dimension $d \geq 3$ are given by 
Chan \emph{et al.}~\cite{higherdimension}. 
Xue~\cite{sodaXue19} considers colored point sets, where the goal is 
to report the closest pair of points with different colors that are 
inside a query range. For constant dimension $d \geq 2$, 
\cite{sodaXue19} presents data structures for different types of 
query regions that report $(1+\varepsilon)$-approximations for the
closest pair with different colors.
References to many other
data structures can be found in 
\cite{higherdimension,translate,DBLP:journals/dcg/XueLRJ22}.

Most of the currently known data structures use super-linear space. 
To the best of our knowledge, linear-sized data structures are known 
only for the following classes of regions, all in dimension $d=2$: 
Quadrants and halfplanes~\cite{DBLP:journals/dcg/XueLRJ22}, and 
translates of a fixed polygon (possibly with holes)~\cite{translate}.
In all these three cases, the query time is $O(\log n)$. 

If each query range $R$ is a vertical slab
$\llbracket H_a, H_b \rrbracket$, we refer to the problem as the
\emph{vertical slab closest pair problem}. In dimension $d=1$, it is
easy to obtain a data structure of size $O(n)$ such that the closest
pair in any ``vertical slab'' (i.e., interval on the real line) can be
computed in $O(\log n)$ time. In dimension $d=2$, Sharathkumar and Gupta~\cite{sharathkumar2007range} gave a data structure of size $O(n \log^2 n)$ that allows
queries to be answered in $O(\log n)$ time.
Xue \emph{et al.}~\cite{DBLP:journals/dcg/XueLRJ22} improved the space bound to $O(n \log n)$,
while keeping a query time of $O(\log n)$.
Both these results use the fact that $f_2(n,n) = O(n \log n)$. 
In fact, both data structures explicitly store the set 
$\{ \CP(S,H_a,H_b) : a < b \}$, whose size is equal to $f_2(n,n)$ 
in the worst case. 

The starting point of our work was to design a data structure of size
$O(n)$ for vertical slab closest pair queries. This led us to
the problem of determining the asymptotic value of the 
function $f_d(n,m)$. Using our bounds in
Theorem~\ref{thm-main1}, we will obtain the following result.

\begin{thm}
\label{thm-main2}
Let $d \geq 2$ be an integer constant and let $\eps>0$ be a real constant.
For every set $S$ of $n$ points in $\R^d$, there exists a data structure
of size $O(n)$ that allows vertical slab closest pair queries to be
answered in $O(n^{1/2+\eps})$ time.
\end{thm}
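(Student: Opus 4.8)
The plan is to build a recursive, tree-based data structure that exploits the combinatorial bound of Theorem~\ref{thm-main1} to keep the total size linear while achieving sublinear query time. Fix a branching parameter $m$ (to be tuned), sort $S$ by first coordinate, and choose splitting hyperplanes $H_{a_1} < \cdots < H_{a_{m+1}}$ so that each of the $m$ open slabs they define contains exactly $n/m$ points of $S$. At the root we do two things. First, we store the set $A(S,m)$ of all distinct closest pairs realized by the ${m+1 \choose 2}$ slabs with endpoints among the $a_i$; by Theorem~\ref{thm-main1} this set has size $O(\min\{m^2, n\log(m/\sqrt{n})\}) = O(n)$ as long as $m = O(\sqrt n)$, and together with an auxiliary structure (e.g.\ a table or a point-location structure indexed by the pair $(i,j)$) it lets us answer, in $O(\log n)$ time, any query slab whose endpoints are ``aligned'' with two of the $H_{a_i}$. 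Second, we recurse: for each of the $m$ child slabs we build the same structure on its $n/m$ points. A query slab $Q = \llbracket H_a,H_b\rrbracket$ is handled by locating $a$ and $b$ among the $a_i$; the part of $Q$ spanning complete child slabs is answered by the root's $A(S,m)$ table, and the two ``fringe'' pieces (a prefix of one child slab on the left, a suffix of another on the right) are each recursed into. Crucially, each fringe piece is again a vertical slab query, but now inside a child slab \emph{whose boundary on the outward side coincides with a splitting hyperplane of that child}; so only \emph{two} recursive subproblems are generated per level, not more. The final answer is the minimum of the $O(\log_m n)$ closest pairs returned along the two root-to-leaf paths.

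The space analysis is immediate: the recursion has branching factor $m$ and depth $\log_m n$, the subproblem sizes at depth $t$ sum to $n$, and at each node of size $s$ we store $O(s)$ words (the $A(\cdot,m)$ set of that node, by Theorem~\ref{thm-main1} applied with parameter $m$ and point-set size $s$, provided $m \le \sqrt s$ — which we ensure by stopping the recursion once $s \le m^2$ and switching to a trivial $O(s)$-space structure at the leaves). Hence the total is $\sum_{t} n \cdot O(1) = O(n\log_m n)$ words. To get genuinely $O(n)$ space rather than $O(n\log_m n)$, take $m = n^{\delta}$ for a small constant $\delta>0$: then $\log_m n = 1/\delta = O(1)$, so the space is $O(n)$.

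For the query time, a query spawns two recursive calls per level and does $O(\log n)$ work per node (point location plus a lookup in the $A(\cdot,m)$ table). With two calls per level the number of nodes visited is $O(2^{\log_m n}) = O(n^{(\log 2)/\log m})$; more carefully, since only the two outermost fringe pieces recurse, the recursion tree is a pair of paths and we actually visit only $O(\log_m n)$ nodes — wait, that would give a polylogarithmic query, which is too good, so the honest accounting is the following. The subtlety, and the main obstacle, is that a fringe piece's outward boundary is aligned with its parent's grid but its \emph{inward} boundary is the original query hyperplane, so when we recurse we again get two fringe pieces, and the total number of leaf-level subproblems is $2^{\log_m n}$. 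Setting $m = n^{\delta}$ gives query time $O\!\left(2^{1/\delta}\cdot \log n\right) = O(\log n)$ only if $\delta$ is constant — so in fact the branching must be balanced against the fringe cost differently: we instead take $m$ so that $\log_m n = \Theta(\log\log n)$ won't help either. The resolution used in the paper is to make the recursion only $O(1/\eps)$ levels deep with branching $m = n^{\eps}$-like at each level but to handle each fringe piece by a \emph{single} call together with a precomputed answer for the aligned part, and to bound the query cost by the size of the deepest subproblem reached, namely $O((n/m)^{?})$; carrying this out, with $m$ chosen as roughly $n^{1/2}$ balanced against a constant number $\lceil 1/\eps\rceil$ of recursion levels so that the fringe subproblems shrink geometrically from size $n$ down to size $n^{1/2}$ where a brute-force $O(n^{1/2+\eps})$-time, $O(n^{1/2})$-space structure takes over, yields total space $O(n)$ and query time $O(n^{1/2+\eps})$. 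The hard part will be choosing this trade-off cleanly: one must verify (i) at every recursion node the parameter stays in the regime $m = O(\sqrt{s})$ so that Theorem~\ref{thm-main1}(1) gives the $O(s)$ bound needed for linear space, and (ii) the two-way fringe recursion, after $O(1/\eps)$ levels, reaches subproblems small enough that an off-the-shelf $O(s^{1/2+\eps})$-query, $O(s)$-space vertical-slab structure finishes the job — and then to sum the $O(1/\eps)$ contributions, each of the form $O(s_t^{1/2+\eps})$ with $s_t$ geometrically decreasing, into the claimed $O(n^{1/2+\eps})$.
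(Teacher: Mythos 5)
There is a genuine gap, and it is not just the unresolved bookkeeping you flag at the end. Your query decomposes $\llbracket H_a,H_b\rrbracket$ into an aligned middle part (answered from the stored $A(S,m)$) and two fringe pieces (answered recursively), and you return the minimum of the answers. But the closest pair in the query slab may \emph{straddle} the boundary between a fringe piece and the aligned part: one endpoint in $\llbracket H_a,H_{a_i}\rrbracket$ and the other in $\llbracket H_{a_i},H_{a_j}\rrbracket$. Such a pair is found neither by the $A(S,m)$ lookup (it is not the closest pair of any aligned slab) nor by the fringe recursion (only one endpoint lies in the fringe). No component of your structure accounts for these cross pairs, so the minimum you return can be strictly larger than the true answer. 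The paper's proof of Theorem~\ref{thm:main2} devotes its entire final phase to exactly this case: after computing a candidate distance $\delta$ from the fringe points (via Bentley--Shamos on the $O(n/m)$ fringe points) and from the aligned part (via Lemma~\ref{lemshortest}), it issues, for each fringe point $p$, a range-reporting query (Lemma~\ref{lem:range-report}) on a box of side $O(\delta)$ around $p$, and uses the packing bound of Lemma~\ref{obs:constant-points} to argue each such box contains $O(1)$ points. That is the step your proposal is missing, and it is where the $O((n/m)\cdot n^{\eps})$ term in the query time comes from.

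Separately, the architecture you end up gesturing at is more complicated than necessary, and your proposal never actually settles on one: the paper's structure is a \emph{single-level} decomposition with $m=\sqrt{n}$, no recursion at all. With $m=\sqrt n$, Theorem~\ref{thm-main1}(1) gives $|A(S,m)|=O(m^2)=O(n)$, so storing the closest pairs of all $\binom{m+1}{2}$ aligned slabs in the structure of Lemma~\ref{lemshortest} already costs only linear space, each fringe has $n/m=\sqrt n$ points, and the three query phases (aligned lookup, fringe-internal closest pair, cross pairs via range reporting) each cost $O(n^{1/2+\eps})$. Your back-and-forth about whether the fringe recursion branches once or twice per level, and the final paragraph deferring the parameter balancing to ``the hard part,'' leave the proof incomplete even setting the cross-pair issue aside; but the cross-pair omission is the substantive error, since without it the algorithm is incorrect, not merely unanalyzed.
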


Note that, prior to our work, no $O(n)$-space data structure with a 
query time of $o(n)$ was known for $d \geq 2$.
We remark that Theorem~\ref{thm-main2} follows from the 
first (i.e., ``easy'') case of Theorem~\ref{thm-main1}. Because of this, 
Section~\ref{data-structure} can be read independently of the rest of the 
paper.

Besides closest pair queries, other query problems have 
been considered. For a point set $S$ and axes-parallel query rectangles $R$,
\cite{BrassKSS13} shows how the area and perimeter of the convex hull, the width, and the radius of the smallest enclosing disk of the points in
$R \cap S$ can be computed; and \cite{DavoodiSW12} shows how the farthest 
pair in $R \cap S$ can be reported. If the points in $S$ have weights, 
\cite{RahulGJR11} shows how to report the $k$ largest weights among the 
points in $R \cap S$. Finally, \cite{BergGM18} stores a set of geometric 
objects, such that all pairs 
of objects that intersect inside $R$ can be reported.

\subparagraph{Organization.}
In Section~\ref{secupper}, we will present the upper bounds in 
Theorem~\ref{thm-main1} on $f_d(n,m)$. The corresponding lower 
bounds will be given in Section~\ref{seclower}. The data structure 
in Theorem~\ref{thm-main2} will be presented in 
Section~\ref{data-structure}. 
We conclude in Section~\ref{future} with some open problems. 

\subparagraph{Notation and Terminology.}
Throughout the rest of this paper, the notions of left and right in $\R^d$ will always refer to the ordering in the first coordinate. That is, if $p=(p_1,p_2,\ldots,p_d)$ and $q=(q_1,q_2,\ldots,q_d)$ are two points in $\R^d$ with $p_1 < q_1$, then we say that $p$ is to the \emph{left} of $q$, and $q$ is to the \emph{right} of $p$. For a vertical hyperplane $H_a$, we say that $p$ is to the \emph{left} of $H_a$ if $p_1 < a$. If $p_1 > a$, then $p$ is to the \emph{right} of $H_a$. 

The Euclidean distance between any two points $p$ and $q$ in $\R^d$ will be denoted by 
$|| p-q ||$. The length, or norm, of any vector $v$ will be denoted by $||v||$. 


\section{Upper bounds on $f_d(n, m)$}
\label{secupper}

Let $d \geq 2$ be a constant, let $m$ and $n$ be integers with
$1 \leq m \leq n$, and let $S$ be a set of $n$ points in $\R^d$. Let
$a_1 < a_2 < \cdots < a_{m+1}$ be real numbers such that for each 
$i = 1,2,\ldots,m$, there are exactly $n/m$ points in $S$ 
between the vertical hyperplanes $H_{a_i}$ and $H_{a_{i+1}}$. 

For any $m$, it is clear that $f_d(n,m) = O(m^2)$, because there are  ${m+1}\choose{2}$ vertical slabs of the form 
$\llbracket H_{a_i}, H_{a_j} \rrbracket$. Thus, the upper bound in Theorem~\ref{thm-main1} holds when $m = O(\sqrt{n})$. In the rest of this 
section, we assume that $m = \omega(\sqrt{n})$. 

The following lemma was
proved by Sharathkumar and Gupta~\cite{sharathkumar2007range} 
for the case when $d=2$. This lemma will be the key tool to prove our 
upper bound on $f_d(n,m)$. 

\begin{lem} \label{lem:gupta-higher-diminsions}
    Let $S$ be a set of $n$ points in $\R^d$ and let $\mathcal{C}\mathcal{P}$ be the set of segments corresponding to the elements of $A(S,n)$. That is, for each pair in $A(S,n)$, the set $\mathcal{C}\mathcal{P}$ contains the line segment connecting the two points in this pair. For any vertical hyperplane $H$, the number of elements of $\mathcal{C}\mathcal{P}$ that cross $H$ is $O(n)$.    
\end{lem}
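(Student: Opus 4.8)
The plan is to fix a vertical hyperplane $H = H_c$ and bound the number of closest-pair segments in $\mathcal{CP}$ that have one endpoint strictly to the left of $H$ and one strictly to the right. Call such a segment \emph{$H$-crossing}. Each such segment is $\CP(S, H_a, H_b)$ for some $a < b$; since the segment crosses $H$, its two endpoints $p$ (left) and $q$ (right) satisfy $p_1 < c < q_1$, and the slab $\llbracket H_a, H_b \rrbracket$ must contain both $p$ and $q$, hence $a \le p_1 < c < q_1 \le b$. The key observation is that if $pq$ is the closest pair in \emph{some} slab containing $p$ and $q$, then it is the closest pair in the \emph{smallest} such slab, namely $\llbracket H_{p_1}, H_{q_1}\rrbracket$ — a slab determined entirely by the two points $p$ and $q$ — because shrinking the slab only removes points and cannot destroy the pair. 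So it suffices to bound the number of pairs $(p,q)$ with $p$ left of $H$ and $q$ right of $H$ such that $pq = \CP(S, H_{p_1}, H_{q_1})$.

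**The packing argument.**

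First I would establish a geometric packing bound. Fix the left endpoint $p$ and let $\delta = \|p - q\|$ for an $H$-crossing closest-pair segment $pq$ through $p$. Because $pq$ is the closest pair in the slab $\llbracket H_{p_1}, H_{q_1}\rrbracket$, every point of $S$ lying in that slab — in particular every point $r$ with $p_1 \le r_1 \le q_1$ — is at distance more than $\delta$ from both $p$ and $q$ (except $p,q$ themselves), and any two such points are more than $\delta$ apart. This is the standard sparseness property of a closest pair. Now consider the slab $\llbracket H, H_{q_1}\rrbracket$ between $H$ and $q$: it has width $q_1 - c < q_1 - p_1$, it contains $q$, and every point of $S$ inside it is $\delta$-separated from the others. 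I would argue that if many different right endpoints $q^{(1)}, q^{(2)}, \dots$ all form $H$-crossing closest pairs with the same $p$, ordered by increasing first coordinate, then their distances to $p$ must grow geometrically: roughly, the slab $\llbracket H_{p_1}, H_{q^{(k+1)}_1}\rrbracket$ contains $q^{(1)}, \dots, q^{(k)}$, all pairwise $\|p - q^{(k+1)}\|$-separated and all lying in a halfspace near $p$, which by a volume/packing argument in $\R^d$ forces $\|p - q^{(k)}\|$ to at least double every constant number of steps. Hence a single left endpoint $p$ participates in only $O(\log(\text{aspect ratio}))$ $H$-crossing closest pairs. Symmetrically each right endpoint participates in $O(\log(\cdot))$ such pairs.

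**From per-point bounds to the global $O(n)$ bound.**

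A logarithmic-per-point bound only gives $O(n\log n)$, so the final step must be sharper. Here I would follow the charging scheme of Sharathkumar and Gupta: rather than charging each $H$-crossing segment to one of its endpoints, partition the $H$-crossing segments by a dyadic classification of their lengths relative to their horizontal extent, or — more in the spirit of their proof — order the left endpoints $p^{(1)}, p^{(2)}, \dots$ by \emph{distance to} $H$ (i.e.\ by $c - p_1$) and show that when $p^{(i)}$ forms an $H$-crossing closest pair $p^{(i)}q$, the point $q$ is forced to be one of a constant number of "nearest surviving" points on the right side, because all earlier (closer-to-$H$) left endpoints together with their partners occupy the region between $p^{(i)}$ and $H$ with $\delta$-separated points, leaving only $O(1)$ room. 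This yields that the total number of $H$-crossing segments is $O(n)$: each point of $S$ is charged $O(1)$ times overall. I expect the main obstacle to be exactly this last refinement — turning the easy $O(\log)$-per-point packing estimate into a global linear bound — since it requires carefully sequencing the endpoints by proximity to $H$ and using the closest-pair separation property simultaneously on both sides of $H$; the higher-dimensional generalization of Sharathkumar–Gupta's two-dimensional argument should go through because all the packing inequalities are dimension-independent up to constants, but the bookkeeping of which points get charged is the delicate part.
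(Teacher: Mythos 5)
Your reduction to minimal slabs is correct, and the per-endpoint packing estimate is essentially right (though note that, ordered by increasing first coordinate, the distances $\|p-q^{(k)}\|$ \emph{decrease} rather than grow, since each later slab contains all earlier partners; the geometric-decay packing still goes through and yields $O(\log\Phi)$ crossing pairs per point, where $\Phi$ is the spread --- not $O(\log n)$). The genuine gap is exactly where you flag it: the step that converts this into a global $O(n)$ bound is not an argument but a description of what an argument would have to accomplish. Moreover, its natural reading --- that each left endpoint $p^{(i)}$ has only $O(1)$ admissible partners $q$, so that ``each point of $S$ is charged $O(1)$ times overall'' --- is false: a single point just to the left of $H$ can be an endpoint of $\Theta(\log\Phi)$ crossing closest-pair segments (take partners at geometrically increasing distances to the right of $H$, each being the nearest neighbour of $p$ within its own minimal slab). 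So no charging scheme that assigns each crossing segment to one of its own endpoints and charges every point $O(1)$ times can succeed; the count must be distributed over some other set of $O(n)$ objects.

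The paper supplies precisely such a set of objects, and this is the idea missing from your outline. For general $d$ it takes a well-separated pair decomposition of $S$ with separation ratio $s>2$, which has $O(n)$ pairs $\{A_i,B_i\}$, and shows that any crossing closest-pair segment $pq$ with $p\in A_i$, $q\in B_i$ must have $p$ equal to the rightmost point of $A_i$ to the left of $H$ and $q$ equal to the leftmost point of $B_i$ to the right of $H$: otherwise the displaced extreme point lies inside the witnessing slab and, by the well-separation inequality, is strictly closer to $p$ than $q$ is, contradicting that $pq$ is the closest pair there. Hence at most two crossing segments per WSPD pair, i.e.\ $O(n)$ in total. (For $d=2$ the paper also gives a self-contained alternative: the positively-sloped crossing segments form an acyclic graph on $S$, and likewise the negatively-sloped ones.) As written, your proposal establishes only an $O(n\log\Phi)$ bound and does not prove the lemma.
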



We first present a proof of this lemma for the case when $d=2$. We 
believe that our proof is simpler than the one 
in~\cite{sharathkumar2007range}. Afterwards, we present a proof 
for any dimension $d \geq 2$. 

\subparagraph{Proof of Lemma~\ref{lem:gupta-higher-diminsions} when $d=2$.}
We write $\ell$ for the vertical
line. 
We define a graph, $G^+$, with vertex set $S$. Each segment of $\mathcal{C}\mathcal{P}$ with a positive slope represents an edge in the graph $G^+$. Let $F$ be the subgraph of $G^+$ induced by the segments of $\mathcal{CP}$ that cross $\ell$. We will show that $F$ does not contain a cycle.

Suppose, to the contrary, that there is a cycle $C$ in $F$. Let $a$ and $b$ be the endpoints of the shortest edge in $C$ such that $a$ is to the left of $\ell$ and $b$ is to the right of $\ell$. Let $ac$ and $bd$ be the other edges of the cycle that are incident to $a$ and $b$, respectively. Since both $ab$
and $ac$ represent pairs in $\mathcal{CP}$ and both have a positive slope, 
we have $a_x < c_x < b_x$ and $a_y < b_y < c_y$. Similarly, we have 
$a_x < d_x < b_x$ and $d_y < a_y < b_y$; see Figure~\ref{figs:gupta}.

\begin{figure}
    \begin{center}
        \includegraphics[scale=0.75]{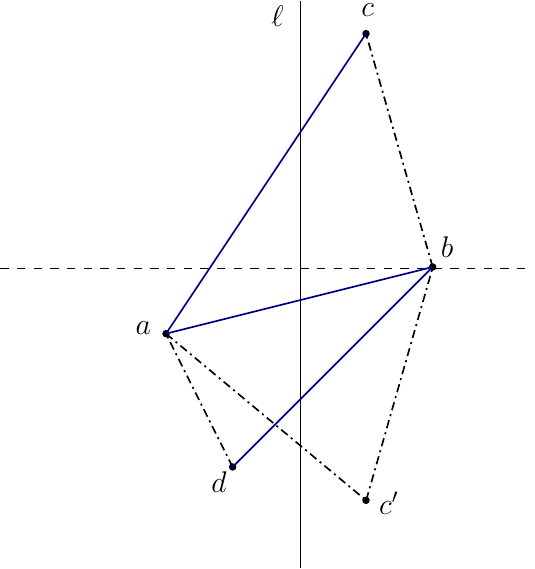}
    \end{center}
    \caption{The pairs in $A(S,n)$ with positive slope that cross $\ell$  do not contain a cycle.}
    \label{figs:gupta}
\end{figure}

Let $c'$ be the reflection of the point $c$ with respect to the 
horizontal line through $b$. Note that $||b-c'|| = ||b-c|| > ||b-d||$, because  
$bd$ represents a pair in $A(S,n)$ and the vertical slab 
$\llbracket b_x, d_x \rrbracket$ contains the point $c$. Since 
$||b-c'|| > ||b-d||$, we have $c'_y < d_y$. We also have $d_y < a_y$ and $a_x < d_x < c_x = c'_x$. It follows that $||a-d|| < ||a-c'||$.

Consider the bisector of the segment $cc'$ (which is the horizontal line through $b$). Observe that the point $a$ is located on the same side as $c'$ with respect to this bisector. Therefore, 
$||a-c'|| < ||a-c||$. Combined with $||a-d|| < ||a-c'||$, this implies that 
$||a-d|| < ||a-c||$. This contradicts the facts that $ac$ represents a pair in $A(S,n)$ and the point $d$ is  in the slab $\llbracket a_x, c_x \rrbracket$.

A similar argument shows that the segments in $\mathcal{C}\mathcal{P}$ 
that cross $\ell$ and have negative slopes do not contain a cycle. Therefore, the total number of segments in $\mathcal{C}\mathcal{P}$  
that cross the line $\ell$ is $O(n)$.
\qed
\vspace{0.4cm}

To prove Lemma~\ref{lem:gupta-higher-diminsions} for dimensions 
$d \geq 2$, we will use the \emph{Well-Separated Pair Decomposition (WSPD)}, as introduced by Callahan and Kosaraju~\cite{well-separated-2}. 
Let $S$ be a set of $n$ points in $\R^d$ and let $s>1$ be a real number, called the \emph{separation ratio}. A WSPD for $S$ is a set of pairs 
$\{A_i, B_i\}$, for $i=1,2,\ldots,k$, for some positive integer $k$, such that 

\begin{enumerate} 
\item for each $i$, $A_i \subseteq S$ and $B_i \subseteq S$, 
\item for each $i$, there exist two balls $D$ and $D'$ of the same radius, say $\rho$, such that $A_i \subseteq D$, $B_i \subseteq D'$, and the distance between $D$ and $D'$ is at least $s \cdot \rho$, 
i.e., the distance between their centers is at least $(s+2) \cdot \rho$, 
\item for any two distinct points $p$ and $q$ in $S$, there is a unique index $i$ such that $p \in A_i$ and $q \in B_i$ or vice-versa. 
\end{enumerate} 

Consider a pair $\{A_i,B_i\}$ in a WSPD. If $p$ and $p'$ are two points in $A_i$ and $q$ is a point in $B_i$, then it is easy to see that 
\begin{equation}\label{eq:WSPD}
||p-p'|| \leq (2/s) \cdot ||p-q|| . 
\end{equation}


\begin{lem}[Callahan and Kosaraju~\cite{well-separated-2}] \label{callahan}
    Let $S$ be a set of $n$ points in $\R^d$, and let $s > 1$ be a real number. A well-separated pair decomposition for $S$, with separation ratio $s$, consisting of $O(s^d n)$ pairs, can be computed in $O(n\log{n} + s^d n)$ time.
\end{lem}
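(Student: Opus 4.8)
The plan is to obtain the WSPD from a \emph{fair split tree} of $S$ together with the classical recursive pair-generation procedure, and to bound the number of pairs by a packing argument. First I would build a binary tree $T$ whose $n$ leaves are the points of $S$; each internal node $u$ stores the subset $S_u\subseteq S$ in its subtree and the smallest enclosing axis-parallel box $R(u)$ of $S_u$. Each internal node is obtained from its parent by a \emph{fair split}: cut the current box by a hyperplane orthogonal to one of its longest sides, placed at distance at least one third of that side from each of the two corresponding faces. This invariant forces the longest side of the enclosing box to shrink by a constant factor every $O(d)$ levels along any root-to-leaf path and keeps the boxes ``fat'', and the tree has exactly $2n-1$ nodes. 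The delicate part of this step is carrying it out in $O(n\log n)$ total time, since a naive implementation spends $\Theta(|S_u|)$ at each node and the tree can have linear depth; I would use the Callahan--Kosaraju device of keeping, for each surviving node, the points sorted in all $d$ coordinate directions as cross-linked doubly linked lists, always scanning off the \emph{smaller} side of a split (so the work is charged to the smaller child, which gives a $\sum_i n_i\log n_i=O(n\log n)$ bound), and short-circuiting long runs of splits that merely shrink the outer box without separating points. A compressed quadtree or balanced box decomposition would serve equally well.

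Next I would generate the pairs by the recursion $\textsc{FindPairs}(u,v)$: if $R(u)$ and $R(v)$ are $s$-well-separated, output $(S_u,S_v)$; otherwise split the node whose box has the larger longest side --- say $u$, with children $u_1,u_2$ --- and recurse on $(u_1,v)$ and $(u_2,v)$; a set of size at most $1$ halts the recursion. Running this on the two children of the root of $T$ yields the decomposition. Property~(3) of a WSPD is immediate, because the recursion tree partitions the $\binom{n}{2}$ unordered point pairs, so each pair ends up in exactly one output pair. Property~(2) requires only a cosmetic adjustment to meet the ``two balls of equal radius'' requirement: enclose $R(u)$ and $R(v)$ in their minimum enclosing balls and inflate the smaller one to the common radius $\rho$; since a box of longest side $\ell$ lies in a ball of radius $\sqrt{d}\,\ell/2$, running the tree procedure with a separation parameter a constant factor larger than the target $s$ restores $s$-separation for the equal-radius balls, changing all bounds by only a constant factor.

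The heart of the proof is the size bound $O(s^d n)$. Charge each output pair $(S_u,S_v)$ to whichever of $u,v$ was the child just created by the split that spawned the call $\textsc{FindPairs}(u,v)$; say it is $v$, with parent $w$. At the moment of that split the call $(u,w)$ was \emph{not} $s$-well-separated and $\ell_{\max}(R(w))\ge\ell_{\max}(R(u))$, since we always split the larger box; hence $w$ has longest side at least $\ell_{\max}(R(u))$ and lies within distance $O(s)\cdot\ell_{\max}(R(u))$ of $R(u)$, and the fair-split invariants give a matching lower bound on its size. Consequently, over all pairs charged to this fixed $v$, the companion boxes $R(u)$ have longest sides within a constant factor of one another, bounded aspect ratio, and all fit into a ball of radius $O(s)$ times that side length, so a volume/packing estimate bounds their number by $O(s^d)$. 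Thus each of the $2n-1$ nodes of $T$ is charged $O(s^d)$ times, giving $O(s^d n)$ output pairs; since $\textsc{FindPairs}$ does $O(1)$ work per recursion node and there are $O(s^d n)$ such nodes, the running time after the tree is built is $O(s^d n)$, for a total of $O(n\log n+s^d n)$.

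I expect the $O(n\log n)$ construction of the fair split tree to be the main obstacle: the amortized charging across unbalanced splits, together with the collapsing of the ``outer box'' shrinking steps, is where the real work lies. By contrast the $O(s^d)$ packing bound, although fiddly with constants, is a clean geometric computation once the fair-split invariants (bounded aspect ratio, geometric size decrease across levels) are in place.
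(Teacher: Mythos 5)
This lemma is quoted from Callahan and Kosaraju and the paper offers no proof of it, so there is nothing internal to compare against; what you have written is a sketch of the standard argument from the cited source (fair split tree, recursive pair generation, packing bound), and at the level of a sketch it is essentially right, including your correct identification of where the real difficulty lies (the $O(n\log n)$ split-tree construction with its outer-rectangle bookkeeping, and the packing lemma for the $O(s^d)$ charge per node). One genuine slip: running \textsc{FindPairs} only on the two children of the root does \emph{not} yield a WSPD --- it covers only those point pairs separated by the root split, so property~(3) fails for any two points lying in the same child of the root. You must take the union of \textsc{FindPairs}$(u_1,u_2)$ over the children $u_1,u_2$ of \emph{every} internal node $u$ (equivalently, define the decomposition recursively on the tree); your subsequent claim that the recursion partitions all ${n \choose 2}$ unordered pairs is correct only for this corrected version, and the size and time analyses go through unchanged since the charging argument is already per node of $T$. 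Beyond that, your reduction of the equal-radius-balls requirement in property~(2) to a constant-factor change in the separation parameter is fine, and the remaining gaps (the amortized analysis of the unbalanced splits, the precise statement of the packing lemma for fair-split rectangles) are exactly the technical content of the original paper rather than missing ideas.
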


\noindent 
{\bf Proof of Lemma~\ref{lem:gupta-higher-diminsions}
when $d \geq 2$.}
Let $s>2$ be a constant and consider a WSPD $\{A_i,B_i\}$, $i=1,2,\ldots,k$, for the point set $S$ with separation ratio $s$, where $k=O(n)$; see Lemma~\ref{callahan}. We define the following geometric graph $G$ on the point set $S$. For each $i$ with $1 \le i \le k$, let 

\begin{itemize}
    \item $a_i$ be the rightmost point in $A_i$ that is to the left of $H$,
    \item $b_i$ be the leftmost point in $B_i$ that is to the right of $H$,
    \item $a'_i$ be the leftmost point in $A_i$ that is to the right of $H$, and
    \item $b'_i$ be the rightmost point in $B_i$ that is to the left of $H$. 
\end{itemize}
We add the edges $a_i b_i$ and $a'_i b'_i$ to the graph $G$. Note that some of these points may not exist, in which case we ignore the corresponding edge. It is clear that $G$ has $O(n)$ edges. The lemma will follow from the fact that every segment in $\mathcal{C}\mathcal{P}$ that crosses $H$ is an edge in $G$. 

Let $pq$ be a pair in $\mathcal{C}\mathcal{P}$ that crosses $H$, and let $Q$ be a vertical slab such that $pq$ is the closest pair in $Q \cap S$. We may assume, without loss of generality, that $p$ is to the left of $H$ and $q$ is to the right of $H$. Let $i$ be the index such that (i) $p \in A_i$ and $q \in B_i$ or (ii) $p \in B_i$ and $q \in A_i$. We may assume, without loss of generality, that (i) holds. 

We claim that $p = a_i$. To prove this, suppose that $p \neq a_i$. Then, since $p$ is to the left of $a_i$, $a_i$ is in the slab $Q$. Since $s>2$, Equation~\eqref{eq:WSPD} yields 
$||p - a_i|| < ||p-q||$, which is a contradiction. 
By a symmetric argument, we have $q = b_i$. Thus, $pq$ is an edge in $G$.
\qed
\vspace{0.4cm}


Lemma~\ref{lem:gupta-higher-diminsions} gives us a 
divide-and-conquer approach to prove an upper bound on $f_d(n,m)$: 

\begin{thm} \label{thm:odak-set}
Let $d \geq 2$ be a constant, and let $m$ and $n$ be integers with 
$m = \omega(\sqrt{n})$ and $m \leq n$. Then 
$f_d(n,m) = O(n\log{(m^2/n)})$. 
\end{thm}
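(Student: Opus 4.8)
The plan is to use Lemma~\ref{lem:gupta-higher-diminsions} as the engine of a divide-and-conquer recursion over the $m+1$ hyperplanes $H_{a_1},\ldots,H_{a_{m+1}}$. Pick the middle hyperplane $H_{a_t}$ with $t=\lceil (m+1)/2\rceil$. Every slab $\llbracket H_{a_i},H_{a_j}\rrbracket$ with $1\le i<j\le m+1$ falls into one of three categories: (a) both $i,j\le t$, (b) both $i,j\ge t$, or (c) $i<t<j$ (the slab straddles $H_{a_t}$). Slabs of type (a) only ever see the $n/2$ points of $S$ lying left of $H_{a_t}$, together with the $t$ hyperplanes among them; likewise type (b) sees the $n/2$ points right of $H_{a_t}$ and $m+1-t$ hyperplanes. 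So the closest pairs contributed by types (a) and (b) are counted by $f_d(n/2,\lceil m/2\rceil)$ each, giving a recursive term $2\,f_d(n/2,m/2)$ (up to the rounding the paper has already agreed to suppress).

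The crux is bounding the number of \emph{new} closest pairs contributed by straddling slabs of type (c). Here I would argue that any such closest pair, being realized in a slab containing $H_{a_t}$, is in particular an element of $A(S,n)$ whose connecting segment crosses $H_{a_t}$ --- more carefully, the closest pair of $\llbracket H_{a_i},H_{a_j}\rrbracket\cap S$ is the closest pair of some slab with arbitrary real boundaries, hence lies in $A(S,n)$, and since the two points straddle $H_{a_t}$ the segment crosses $H_{a_t}$. By Lemma~\ref{lem:gupta-higher-diminsions} with $H=H_{a_t}$, the number of such segments is $O(n)$. Therefore the straddling slabs contribute only $O(n)$ distinct closest pairs, independent of how many such slabs there are. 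This yields the recurrence
\[
 f_d(n,m) \;\le\; 2\, f_d\!\left(\tfrac{n}{2},\tfrac{m}{2}\right) + O(n).
\]

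The base case is when $m = O(\sqrt{n})$: there $f_d(n,m)=O(m^2)=O(n)$ trivially (there are only $\binom{m+1}{2}$ slabs). So the recursion terminates after $\log_2(m/\sqrt n)$ levels, since halving $m$ while halving $n$ preserves the ratio $m/\sqrt n$ only up to a factor $\sqrt 2$ per level --- precisely, after $r$ levels the parameters are $(n/2^r, m/2^r)$ with ratio $(m/\sqrt n)/2^{r/2}$, which drops below a constant once $r = 2\log_2(m/\sqrt n) + O(1)$. Unrolling: at level $r$ there are $2^r$ subproblems each adding $O(n/2^r)$, i.e.\ $O(n)$ per level, over $O(\log(m/\sqrt n))$ levels, plus $O(\log(m/\sqrt n))$ leaves of size $O(n/2^r)$ summing to $O(n)$. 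Hence $f_d(n,m) = O\!\big(n\log(m/\sqrt n)\big)$.

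The main obstacle I anticipate is the bookkeeping at the interface rather than any deep idea: one must be careful that the "left" subproblem genuinely has $n/2$ points distributed evenly among its $t$ sub-slabs (which holds because each original sub-slab $\llbracket H_{a_i},H_{a_{i+1}}\rrbracket$ already has exactly $n/m$ points, so a contiguous block of $t-1$ of them has exactly $(t-1)\cdot n/m = n/2$ points), and that the $O(n)$ bound from Lemma~\ref{lem:gupta-higher-diminsions} really does cover \emph{all} type-(c) closest pairs simultaneously and not just those for a single slab --- this is exactly what the lemma delivers, since it bounds segments of $A(S,n)$ crossing a fixed hyperplane, and every type-(c) closest pair is such a segment. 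The floor/ceiling issues are explicitly waved away by the paper's standing assumption that $n$ is a multiple of $m$; I would note that after one halving step this divisibility may fail, but since we are proving an asymptotic upper bound the standard rounding-up argument (replace $n$ by $2^{\lceil\log n\rceil}$, etc.) absorbs it into the constants.
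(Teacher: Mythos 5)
Your proposal is correct and follows essentially the same route as the paper: split at the middle hyperplane, bound the straddling closest pairs by $O(n)$ via Lemma~\ref{lem:gupta-higher-diminsions} (noting, as you do, that any such pair is a segment of $A(S,n)$ crossing that hyperplane), and unroll the recurrence $f_d(n,m)\le 2f_d(n/2,m/2)+O(n)$ for $\log(m^2/n)$ levels until $m=O(\sqrt{n})$, where the trivial $O(m^2)$ bound takes over. (Minor slip: at the terminal level there are $2^r$ leaves, not $O(\log(m/\sqrt{n}))$ of them, but their total contribution is indeed $O(n)$ as you state.)
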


\begin{proof}
    Let $S$ be a set of $n$ points in $\R^d$ for which 
    $f_d(n,m) = |A(S,m)|$. Let $a_1 < a_2 < \cdots < a_{m+1}$ be real numbers such that for each $i = 1,2,\ldots,m$, there are exactly $n/m$ points in $S$ that are strictly inside the vertical slab $\llbracket H_{a_i}, H_{a_{i+1}} \rrbracket$. 
    
    Let $H = H_{a_{1+m/2}}$. Observe that $n/2$ points of $S$ are to the left of $H$ and $n/2$ points of 
    $S$ are to the right of $H$. Denote these two subsets by $S^-$ 
    and $S^+$, respectively. Each pair in $A(S,m)$ is either a pair in 
    $A(S^-,m/2)$ or a pair in $A(S^+,m/2)$ or it crosses $H$. 
    Since $A(S,m)$ is a subset of $A(S,n)$, it follows from 
    Lemma~\ref{lem:gupta-higher-diminsions} that the number of pairs in 
    $A(S,m)$ that cross $H$ is $O(n)$. Thus,
     \begin{eqnarray*} 
         f_d(n,m) & = & |A(S,m)| \\ 
           & = & |A(S^-,m/2)| + |A(S^+,m/2)| + O(n) \\ 
           & \leq & 2 \cdot f_d(n/2,m/2) + O(n) . 
     \end{eqnarray*} 
    If we apply this recurrence $k$ times, we get 
    \[ f_d(n,m) \leq 2^k \cdot f_d (n/2^k,m/2^k) + O(kn) . 
    \]
    For $k=\log (m^2/n)$, we have $n/2^k = n^2/m^2$ and 
    $m/2^k = n/m$. Thus, 
    \[ f_d(n,m) \leq
         \frac{m^2}{n} \cdot f_d \left( n^2/m^2 , n/m \right) + 
         O( n \log(m^2/n) ) . 
    \]
Since $f_d \left( n^2/m^2 , n/m \right) = O(n^2/m^2)$, we conclude that 
\[ f_d(n,m) = O( n + n \log(m^2/n) ) = O( n \log(m^2/n)) .
    \qedhere
\]
\end{proof}

\section{Lower bounds on $f_d(n, m)$}
\label{seclower}

In this section, we prove the lower bounds on $f_d(n,m)$ as stated in 
Theorem~\ref{thm-main1}. We will prove these lower bounds for the case 
when $d=2$. It is clear that this will imply the same lower bound for 
any constant dimension $d \geq 2$. 

\subsection{The case when $m \leq 3 \sqrt{n}$}

\begin{thm} \label{thm:smid-set}
    Let $n$ and $m$ be positive integers with $n \geq m(m+1)$. 
    Then $f_2(n, m) = {{m+1} \choose 2}$.
\end{thm}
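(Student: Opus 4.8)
Since $|A(S,m)|$ is by definition a set of at most ${{m+1}\choose 2}$ elements, we have $f_2(n,m)\le{{m+1}\choose 2}$ for every $n$, so the content of the statement is the matching lower bound: assuming only $n\ge m(m+1)$ (and $n$ a multiple of $m$, as throughout), I must produce an $n$-point set $S$ in general position for which all ${{m+1}\choose 2}$ slabs $\llbracket H_{a_i},H_{a_j}\rrbracket$ have \emph{pairwise distinct} closest pairs. The plan is to build $S$ out of ``dumbbells'': for each pair $(i,j)$ with $1\le i<j\le m+1$ I create a dedicated two-point pair $\{p_{ij},q_{ij}\}$, with $p_{ij}$ placed strictly between $H_{a_i}$ and $H_{a_{i+1}}$ and $q_{ij}$ strictly between $H_{a_{j-1}}$ and $H_{a_j}$, so that $\llbracket H_{a_i},H_{a_j}\rrbracket$ is exactly the smallest slab of the family containing both endpoints. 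The closest pair of $\llbracket H_{a_i},H_{a_j}\rrbracket$ will then turn out to be precisely $\{p_{ij},q_{ij}\}$.

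For the construction I would take the hyperplanes clustered near the $y$-axis, say $a_k=k\delta$ with $\delta$ minuscule, so that all horizontal distances are negligible and the vertical coordinates carry all the information. Let $D_{ij}:=\|p_{ij}-q_{ij}\|$ be (essentially) a function of $j-i$ that is strictly \emph{decreasing} in $j-i$ — e.g.\ $D_{ij}=2-(j-i)/(2m)+\eps_{ij}$ for generic tiny $\eps_{ij}$ — so that all ${{m+1}\choose 2}$ dumbbell lengths are distinct. Place the dumbbells far apart vertically (the $r$-th one around height $1000r$, say), so that every point lies at distance larger than $\max_{ij}D_{ij}$ from every point other than its own dumbbell partner. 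A simple count shows each of the $m$ columns then receives exactly $m+1$ special points: the $p_{ij}$ with $i$ equal to the column index ($m+1-k$ of them) together with the $q_{ij}$ with $j-1$ equal to the column index ($k$ of them). Since $n/m\ge m+1$, pad each column with $n/m-(m+1)$ ``filler'' points placed very far away and far from one another, so that each column contains exactly $n/m$ points, the totals add up to $n$, and no filler point ever participates in a short pair.

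The verification is then short. A dumbbell $\{p_{ij},q_{ij}\}$ lies entirely inside $\llbracket H_{a_{i'}},H_{a_{j'}}\rrbracket$ if and only if $i'\le i$ and $j\le j'$; among all dumbbells satisfying this, $j-i\le j'-i'$ with equality only for $(i,j)=(i',j')$, and since $D$ is strictly decreasing in $j-i$, the length $D_{ij}$ is minimized uniquely at $(i',j')$. Every other pair of points lying in the slab — a dumbbell endpoint whose partner lies outside, a filler point, or two points from different dumbbells — has distance exceeding $\max_{ij}D_{ij}$ by the isolation property, hence cannot be the closest pair. So the closest pair of $\llbracket H_{a_{i'}},H_{a_{j'}}\rrbracket$ is exactly $\{p_{i'j'},q_{i'j'}\}$; these ${{m+1}\choose 2}$ pairs are pairwise distinct, so $|A(S,m)|={{m+1}\choose 2}$ and $f_2(n,m)\ge{{m+1}\choose 2}$, completing the bound.

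The one place needing care — and the main obstacle, though it is tedium more than difficulty — is choosing the numerical parameters so that everything is simultaneously consistent: the lengths $D_{ij}$ must be strictly monotone in $j-i$ and all distinct; the column-cluster width $m\delta$ must be small enough that the $O((m\delta)^2)$ discrepancy between a dumbbell's true Euclidean length and its nominal vertical length $D_{ij}$ does not reorder the lengths or destroy the monotonicity; the inter-dumbbell and filler spacings must dominate $\max_{ij}D_{ij}$; and finally all points must be perturbed into general position (distinct first coordinates, distinct pairwise distances) without violating any of these inequalities. One can check that, for instance, taking $|\eps_{ij}|<1/(10m)$, separating consecutive dumbbells by more than $\max_{ij}D_{ij}$ vertically, and choosing $\delta$ small relative to the minimum gap among the resulting lengths makes all of these hold simultaneously; spelling out this chain of inequalities is essentially all that remains.
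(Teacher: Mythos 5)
Your proposal is correct and follows essentially the same strategy as the paper: one dedicated pair per slab, with the left endpoint in column $i$ and the right endpoint in column $j-1$, pair lengths strictly decreasing in the slab span $j-i$, and all other inter-point distances made large enough to be irrelevant (the paper realizes this via a greedy construction processing slabs in decreasing order of size, assigning each new pair a length one more than the previous). The verification is the same monotonicity argument, so the two proofs differ only in presentation.
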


\begin{proof}
It is clear that $f_2(n, m) \leq {{m+1} \choose 2}$. To prove the lower
bound, we will construct a set $S$ of $n$ points in $\R^2$ such that 
the ${m+1} \choose 2$ vertical slabs have distinct closest pairs. 

For $i=1,2,\ldots,m+1$, we take $a_i = i$ and consider the corresponding 
hyperplane $H_i$. Let 
$\mathcal{Q} = \{ \llbracket H_i, H_j \rrbracket : 1 \leq i < j \leq m+1 \}$ be the set of all possible vertical slabs. We define the \emph{size} 
of a slab $\llbracket H_i, H_j \rrbracket$ to be the difference $j-i$ 
of their indices. 

We start by constructing a set $P$ of $m(m+1)$ points such that 
the slabs in $\mathcal{Q}$ contain distinct closest pairs in $P$, 
and for each $i=1,2,\ldots,m$, the slab $\llbracket H_i,H_{i+1} \rrbracket$
contains exactly $m+1$ points of $P$. 

Note that the slab $\llbracket H_1 , H_{m+1} \rrbracket$ has the largest
size. Let $p$ be an arbitrary point in $\llbracket H_1,H_2 \rrbracket$
and let $q$ be an arbitrary point in $\llbracket H_m,H_{m+1} \rrbracket$. 
We initialize $P = \{p,q\}$, $D = || p-q ||$, and delete the slab 
$\llbracket H_1 , H_{m+1} \rrbracket$ from $\mathcal{Q}$. 

As long as $\mathcal{Q}$ is non-empty, we do the following: 

\begin{itemize}
\item Take a slab $\llbracket H_i,H_j \rrbracket$ of largest size 
in $\mathcal{Q}$. 
\item Let $p$ be an arbitrary point in 
$\llbracket H_i,H_{i+1} \rrbracket$ such that $p$ is above the bounding 
box of $P$, and the distance between $p$ and any point in $P$ is more than $D+2$. 
\item Let $q$ be an arbitrary point in 
$\llbracket H_{j-1},H_j \rrbracket$ such that $q$ is above the bounding 
box of $P$, the distance between $q$ 
and any point in $P$ is more than $D+2$, and $|| p-q || = D+1$. 
\item Add $p$ and $q$ to $P$.
\item Set $D = || p-q ||$. 
\item Delete the slab $\llbracket H_i , H_j \rrbracket$ from $\mathcal{Q}$.
\end{itemize}

It is not difficult to see that the final point set $P$ has the 
properties stated above. 

To obtain the final point set $S$, of size $n$, we define a set $P'$ of 
$n-m(m+1)$ points, such that each point in $P'$ has distance more than 
$D$ to all points of $P$, the closest pair distance in $P'$ is more 
than $D$, and for each $i=1,2,\ldots,m$, the slab 
$\llbracket H_i , H_{i+1} \rrbracket$ contains $n/m - (m-1)$ points 
of $P'$. The point set $S = P \cup P'$ has the property that 
$|A(S,m)| = {{m+1} \choose 2}$. 
\end{proof}

\begin{cor} \label{cor-sqrt}
    Let $n$ and $m$ be sufficiently large positive integers with $n < m(m+1)$ and 
    $m \leq 3 \sqrt{n}$. Then $f_2(n,m) = \Omega(m^2)$. 
\end{cor}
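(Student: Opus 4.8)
The plan is to reduce the corollary to the exact equality $f_2(N,m) = \binom{m+1}{2}$ proved in Theorem~\ref{thm:smid-set}. That theorem requires $N \geq m(m+1)$, so the only obstacle is that our given $n$ may be smaller than $m(m+1)$; the hypothesis $m \leq 3\sqrt{n}$ is precisely what guarantees that $n$ is not too far below $m(m+1)$, so we can compensate by slightly shrinking the number of slabs.

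First I would set $m' = \lfloor \sqrt{n}/2 \rfloor$ (or any comparable choice), so that $m'(m'+1) \leq n$ and hence Theorem~\ref{thm:smid-set} applies with parameter $m'$, giving a point set $S$ of $n$ points with $|A(S,m')| = \binom{m'+1}{2}$. Since $m' = \Theta(\sqrt{n})$ and, by the hypothesis $m \leq 3\sqrt{n}$, also $m = O(\sqrt{n})$, we get $m' = \Omega(m)$ when combined with $m \leq 3\sqrt{n}$; more carefully, $\binom{m'+1}{2} = \Omega(n) = \Omega(m^2)$, using $m^2 \leq 9n$. Thus $f_2(n,m') = \Omega(m^2)$.

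The remaining point is that the corollary asks about $f_2(n,m)$, not $f_2(n,m')$, and these use a different number of dividing hyperplanes with the $n/m$-points-per-strip constraint rather than $n/m'$. To bridge this, I would observe that the construction in Theorem~\ref{thm:smid-set} is robust: one can take the $m'+1$ ``essential'' hyperplanes realizing the $\binom{m'+1}{2}$ distinct closest pairs and interleave the remaining $m - m'$ hyperplanes among them (placing them in regions of the plane far from all constructed points, exactly as the filler set $P'$ is placed), redistributing points so that every one of the $m$ strips contains exactly $n/m$ points. The $\binom{m'+1}{2}$ slabs between essential hyperplanes still have their prescribed distinct closest pairs, so $|A(S,m)| \geq \binom{m'+1}{2} = \Omega(m^2)$.

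The step I expect to be the main obstacle is the bookkeeping in the last paragraph: verifying that the extra hyperplanes and redistributed filler points can be inserted without creating a new, smaller closest pair inside any of the tracked slabs, and without disturbing the per-strip count of $n/m$. This is routine in spirit — it mirrors the padding argument already used at the end of the proof of Theorem~\ref{thm:smid-set} with the set $P'$ — but it requires care that $n \geq m(m+1)$ is not silently needed; the inequality $m \leq 3\sqrt{n}$ must be used to ensure $n/m \geq m/3 \geq m' - 1$ so that each strip can indeed absorb the at most $m'-1 \le m$ essential points it might contain while still reaching a total of $n/m$. Once this is checked, the corollary follows immediately.
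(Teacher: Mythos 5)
Your proposal is correct and takes essentially the same route as the paper: apply Theorem~\ref{thm:smid-set} with a reduced parameter $m'=\Theta(\sqrt{n})$ and then pad/redistribute distant filler points so that each of the $m$ strips holds exactly $n/m$ points (the paper chooses $m'=\sqrt{n}/4$ and builds the core construction on only $n'=m'(m'+1)$ points before padding up to $n$). One minor quantitative slip: $m\le 3\sqrt{n}$ yields only $n/m\ge \sqrt{n}/3$ (hence $n/m\ge m/9$), not $n/m\ge m/3$, so with your choice $m'=\lfloor\sqrt{n}/2\rfloor$ the per-strip capacity check in your last paragraph is genuinely tight; taking $m'$ a constant factor smaller, exactly as the paper does, removes the issue without changing the $\Omega(m^2)$ conclusion.
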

\begin{proof}
For $i=1,2,\ldots,m+1$, we take $a_i = i$ and consider the 
corresponding hyperplane $H_i$. 

Let $m' = \sqrt{n}/4$ and $n' = m' (m' + 1)$. We apply 
Theorem~\ref{thm:smid-set}, where we replace $n$ by $n'$ 
and $m$ by $m'$. This gives us a set $S'$ of $n'$ points
with $|A(S',m')| = f_2(n',m')$. The points of $S'$ 
are between the hyperplanes $H_1$ and $H_{m'+1}$; for each 
$i =1,2,\ldots,m'$, the vertical slab
$\llbracket H_i , H_{i+1} \rrbracket$ contains $n'/m'$ 
points of $S'$. Note that 
\[ |A(S',m')| = {{m'+1} \choose 2} = \Omega\left( (m')^2 \right) .
\]
Let $D$ be the diameter of $S'$. 
Let $S$ be the union of $S'$ and a set of $n-n'$ 
additional points that have pairwise distances more than 
$D$, whose distances to the points in $S'$ are more 
than $D$, and such that for each $i=1,2,\ldots,m$, the 
vertical slab $\llbracket H_i , H_{i+1} \rrbracket$ 
contains $n/m$ points of $S$. It is clear that 
\[ f_2(n,m) \geq |A(S',m')|  = \Omega((m')^2) . 
\]
Note that this construction is possible, because 
(i) $n' < n$, (ii) $m' < m$, and (iii) $n'/m' < n/m$; 
these inequalities follow by straightforward algebraic 
manipulations, using the assumptions on $n$ and $m$ in the 
statement of the corollary. Finally, these assumptions 
imply that $m' \geq m/12$. We conclude that 
$f_2(n,m) = \Omega (m^2)$. 
\end{proof}

\subsection{Warm up: The case when $m=n$}
Theorem~\ref{thm:smid-set} and 
Corollary~\ref{cor-sqrt} proved Theorem~\ref{thm-main1} 
for the case when $m \leq 3 \sqrt{n}$.
Before we prove the lower bound for the remaining case, i.e., $m > 3 \sqrt{n}$, we 
consider the case when $m=n$, which will serve as a warm up.

\begin{thm} \label{thm:odak-set-lower}
    We have $f_2(n,n) = \Omega(n\log{n})$.
\end{thm}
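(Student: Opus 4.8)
The plan is to construct, for $n$ a power of two, a point set $S$ of size $n$ whose induced closest pairs over the ${n+1}\choose 2$ vertical slabs realize $\Omega(n\log n)$ distinct pairs. The natural model is a recursive, binary-tree style construction that mirrors the divide-and-conquer recurrence $f(n) = 2f(n/2) + \Omega(n)$ used in the upper bound: a construction on $n$ points should contain (suitably scaled and translated) two independent copies of the $(n/2)$-point construction, plus $\Omega(n)$ extra ``interface'' pairs that cross the central vertical hyperplane and are the closest pair of some slab spanning both halves. Unrolling this recurrence over $\log n$ levels yields $\Omega(n \log n)$.

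First I would set up the recursion carefully. At the top level, place the left half $S^-$ of $n/2$ points and the right half $S^+$ of $n/2$ points, each built recursively, and separate the two halves horizontally by a large gap so that no pair within $S^-$ and no pair within $S^+$ interacts with the other half. The slab boundaries $a_1 < \cdots < a_{n+1}$ are the integers (or any fixed increasing sequence), one unit-width column per point; the recursion splits the columns in half. Then I would add $\Omega(n/2)$ special points arranged so that for roughly $n/2$ different choices of a ``wide'' slab $\llbracket H_i, H_j\rrbracket$ straddling the midpoint, the closest pair in that slab is a fresh crossing pair not seen at any lower level. The trick is to make the crossing pairs at the top level \emph{shorter} than anything inside the recursive copies (so they dominate as closest pairs whenever both their columns are included), while still being \emph{longer} than the crossing pairs to be introduced at deeper levels — i.e., distances should decrease as we go down the recursion tree, or equivalently the recursive copies should be scaled down and placed ``far away'' (e.g., high above) so their internal distances don't undercut the current level's interface pairs. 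A clean way to enforce this is to give level-$\ell$ interface pairs a common length $D_\ell$ with $D_0 > D_1 > \cdots$, placing all the level-$\ell$ machinery at a large vertical offset so cross-level interactions are impossible.

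The concrete interface gadget at a node spanning columns $[i_0, j_0]$ with midpoint $\mu$: I would place points $p_1, \dots, p_t$ just left of $H_\mu$ (in distinct columns, or all in column $\mu-1$) and $q_1,\dots, q_t$ just right of $H_\mu$, with $t = \Theta(n_{\text{node}})$, so that the pair $p_kq_k$ has length $D_\ell$ and is the unique closest pair of the slab that includes exactly columns $i_k$ through $j_k$ for a nested or staircase family of ${\Theta(t)}$ slabs — each new slab, as it widens, ``activates'' one more interface pair whose two endpoints are the newly-included leftmost/rightmost columns among the interface points, and that pair is shorter than everything else currently in the slab because the recursive copies' internal distances were chosen larger (this is exactly the mechanism already used in the proof of Theorem~\ref{thm:smid-set}: take endpoints above the current bounding box, at controlled distances $D+1$, $D+2$). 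Summing $t$ over all nodes of the binary recursion tree gives $\sum_{\ell=0}^{\log n} 2^\ell \cdot \Theta(n/2^\ell) = \Theta(n\log n)$ distinct pairs.

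The main obstacle is ensuring that a pair counted at one node is genuinely \emph{not} equal to a pair counted at any other node, and that the ``activation'' of interface pairs is clean — i.e., when a slab straddles several nested nodes, the claimed closest pair is really the one from the correct (finest relevant) node and not some accidental shorter pair formed between points introduced at different levels or in different subtrees. Handling this requires a careful ordering of the distance scales $D_\ell$ together with generous horizontal gaps between subtrees and generous vertical offsets between levels, so that any ``illegitimate'' pair (endpoints from different subtrees, or from different levels) is strictly longer than the legitimate closest pair of every slab in which both its endpoints appear. I would also need to check the balancing condition: each bottom-level column $\llbracket H_i, H_{i+1}\rrbracket$ must contain exactly $n/n = 1$ point when $m=n$, which forces me to be slightly more economical than in Theorem~\ref{thm:smid-set} — one point per column, with the recursive copies and interface gadgets all sharing the columns, so the arrangement of which point sits in which column has to be pinned down as part of the induction. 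Once these bookkeeping details are arranged, the count $\Omega(n\log n)$ follows immediately from the recursion-tree sum.
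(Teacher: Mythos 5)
Your recursive skeleton---two half-size copies plus $\Omega(n)$ fresh crossing pairs per level, summed over $\log n$ levels---is indeed the structure underlying the paper's construction (a $(\log n)$-dimensional hypercube set $S=\{\sum_i \beta_i v_i\}$ with $v_i=(2^i,y_i)$, whose splitting along the highest bit is exactly your binary tree). But two of your design choices are not bookkeeping; as stated they make the construction fail. First, your distance scales point the wrong way, and the prescription is self-contradictory: you ask that the top-level crossing pairs be shorter than everything inside the recursive copies yet longer than the deeper-level crossing pairs, which live \emph{inside} those copies. The forced direction is that crossing-pair lengths must \emph{increase} with depth: a wide straddling slab contains many points of each half, so its designated crossing pair must beat every pair internal to the halves, including the halves' own interface pairs. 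If instead the copies are ``scaled down'' so their internal distances are smaller, then any straddling slab containing two points of one half has its closest pair inside that half, so each node contributes only the single two-column slab around its midpoint, i.e.\ $O(1)$ pairs per node and $O(n)$ in total. In the paper, the condition $\|v_i\|>2\sum_{j>i}\|v_j\|$ enforces exactly the increasing-with-depth ordering: the pairs with the largest horizontal extent (top bit) are the shortest.

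Second, with $m=n$ the one-point-per-column budget kills the gadget you describe. If the $t=\Theta(n_{\mathrm{node}})$ interface pairs at a node use dedicated points (as in Theorem~\ref{thm:smid-set}, where fresh points are added above the bounding box), those $2t$ columns are unavailable to the recursive copies, and with $t=cn'$ the count obeys $g(n')=2\,g((1/2-c)n')+cn'$, which solves to $g(n)=\Theta(n)$, not $\Theta(n\log n)$. To beat linear, essentially every point must serve as an interface endpoint at every one of the $\log n$ levels simultaneously, and your proposal gives no mechanism for making the $\log n$ distance constraints on a single point mutually consistent. That mechanism is the actual content of the paper's proof: each point is a binary string; the level-$i$ interface pairs are the pairs differing only in bit $i$ (so every point participates once per level); the rapidly decreasing norms make the smallest differing bit determine a pair's length up to a factor of $2$; and a binary-expansion argument on the first coordinates shows that any competing pair inside the designated slab has a strictly smaller smallest differing bit and is therefore strictly longer. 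Without an explicit construction of this kind, the assertion that the remaining details can be arranged is essentially the whole theorem.
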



\begin{proof}
We assume for simplicity that $n$ is a sufficiently large power of two. 
    We will construct a point set $S$ of size $n$ for which 
    $|A(S,n)| = \Omega(n \log n)$. 

    \paragraph{Construction of the point set $S$:}
Let $k = \log n$. For $i=0,1,\ldots,k-1$, let $x_i = 2^i$ and let
    $v_i = (x_i,y_i)$ be a vector, where the value of $y_i$ is inductively defined as follows: We set $y_{k-1} = 0$. Assuming that 
    $y_{k-1},y_{k-2},\ldots,y_{i+1}$ have been defined, we set $y_i$ to 
    an integer such that 
    \begin{equation} 
    \label{blacklozenge}
      ||v_i|| > 2\sum_{j=i+1}^{k-1} ||v_j|| . 
    \end{equation} 
    We define 
    \[ S = \left\{ \sum_{i=0}^{k-1} \beta_i v_i : 
        (\beta_0,\beta_1, \cdots, \beta_{k-1}) \in \{0,1\}^k
           \right\} .
    \] 

To give an example, let $n=8$ and, thus, $k=3$. We can take
$v_0 = (1,25)$, $v_1 = (2,8)$, and $v_2 = (4,0)$.
The coordinates of the points in the set $S$ are listed below, and a 
visualization of $S$ is in Figure~\ref{figs:example}.

\begin{center}
\begin{tabular}{|c|c|c|c|}
\hline
$\beta_0$ & $\beta_1$ & $\beta_2$ &
$\beta_0 v_0 + \beta_1 v_1 + \beta_2 v_2$ \\
\hline
\hline
$0$ & $0$ & $0$ & $(0,0)$ \\
\hline
$1$ & $0$ & $0$ & $(1,25)$ \\
\hline 
$0$ & $1$ & $0$ & $(2,8)$ \\
\hline 
$1$ & $1$ & $0$ & $(3,33)$ \\
\hline 
$0$ & $0$ & $1$ & $(4,0)$ \\
\hline
$1$ & $0$ & $1$ & $(5,25)$ \\
\hline 
$0$ & $1$ & $1$ & $(6,8)$ \\
\hline 
$1$ & $1$ & $1$ & $(7,33)$ \\
\hline
\end{tabular}
\end{center}

\begin{figure}
    \begin{center}
        \includegraphics[scale=0.75]{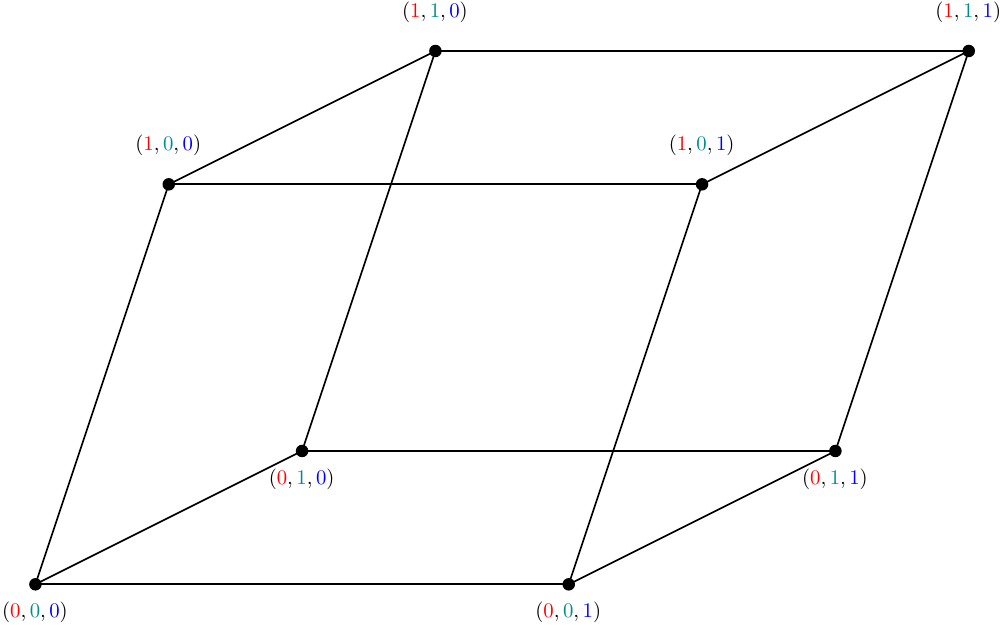}
    \end{center}
    \caption{An illustration of the point set $S$, together with the 
    corresponding $3$-dimensional hypercube $Q_3$.}
    \label{figs:example}
\end{figure}

    Note that each binary sequence of length $k$ represents a unique point 
    in $S$. Using this representation, each point of $S$ corresponds 
    to a vertex of a $k$-dimensional hypercube $Q_k$. 
    We will prove below that each edge of $Q_k$ corresponds to a closest
    pair in a unique vertical slab. Since $Q_k$ has 
    $k \cdot 2^{k-1} = \Omega(n \log n)$ edges, this will complete the 
    proof. 

\paragraph{Each edge of $Q_k$ corresponds to a closest pair:}
    Consider an arbitrary edge of $Q_k$. The two vertices of this edge 
    are binary sequences of length $k$ that have Hamming distance one, 
    i.e., they differ in exactly one bit. Let $t$ be the position at 
    which they differ. Observe that $0 \leq t \leq k-1$. Let $r$ and 
    $s$ be the points of $S$ that correspond to the two vertices of 
    this edge. Then $v_t$ is either $r-s$ or $s-r$. We will prove that 
    $r$ and $s$ form the closest pair in the vertical slab 
    $\llbracket H_{r_1} , H_{s_1} \rrbracket$, where $r_1$ and $s_1$ are the first 
    coordinates of $r$ and $s$, respectively (assuming that 
    $r_1 < s_1$). Note that $r_1$ and $s_1$ are integers. 

    Let $p$ and $q$ be two points in 
    $\llbracket H_{ r_1 } , H_{ s_1 }\rrbracket \cap S$ such that 
    $\{ p,q \} \neq \{ r,s \}$. We have to show that 
    $|| r-s || < || p-q ||$. Since $p$ and $q$ are points in $S$, 
    we can write them as 
    \[ p = \sum_{i=0}^{k-1} \beta_{p,i} v_i 
    \]
    and 
    \[ 
    q = \sum_{i=0}^{k-1} \beta_{q,i} v_i . 
    \]
    Let $\ell$ be the smallest index for which 
    $\beta_{p,\ell} \neq \beta_{q,\ell}$. Then 
    \[ || p-q || = \left| \left| \sum_{i=\ell}^{k-1} 
               ( \beta_{p,i} - \beta_{q,i} ) v_i \right| \right| . 
    \]
    By the triangle inequality, we have $||a+b|| \leq ||a||+||b||$ and 
    $||a+b|| \geq ||a||-||b||$ for any two vectors $a$ and $b$. These two inequalities, 
    together with (\ref{blacklozenge}), imply that 
    \[ || p-q || \geq || v_{\ell} || - 
            \sum_{i=\ell+1}^{k-1} || v_i || > 
            \frac{1}{2} \cdot || v_{\ell} || . 
    \]
    Thus, it suffices to show that 
    $|| v_{\ell} || \geq 2 \cdot || r-s || = 2 \cdot || v_t ||$.
    If we can show that $\ell < t$, then, using (\ref{blacklozenge}), 
    \[ || v_{\ell} || > 2 \sum_{i=\ell+1}^{k-1} || v_i || 
         \geq 2 \cdot || v_t || 
    \]
    and the proof is complete. 

    \paragraph{Completing the proof:}

Observe that all points of $S$ have pairwise distinct first coordinates. Also, the points $r$ and $s$ are on the boundary of
$\llbracket H_{r_1} , H_{s_1} \rrbracket$. Our assumption that 
$\{p,q\} \neq \{r,s\}$ implies that the 
horizontal distance between $p$ and $q$ (i.e., $| p_1-q_1 |$) 
    is smaller than the horizontal distance between $r$ and $s$.
    Recall that $v_t$ is either $r-s$ or $s-r$. Thus, the horizontal 
    distance between $r$ and $s$ is equal to the first coordinate of 
    the vector $v_t$, which is $x_t$. 

    Let $\ell'$ be the largest index for which 
    $\beta_{p,\ell'} \neq \beta_{q,\ell'}$. 
    Note that $\ell \leq \ell'$. 
    If $\ell = \ell'$, then 
    \[ p-q = \left( \beta_{p,\ell} - \beta_{q,\ell} \right) v_{\ell} 
    \]
    and 
    \[ | p_1 - q_1 | = | \beta_{p,\ell} - \beta_{q,\ell} | x_{\ell} 
               = x_{\ell} .
    \]
    Now assume that $\ell < \ell'$. We may assume, without loss
    of generality, that 
    $\beta_{p,\ell'} = 1$ and $\beta_{q,\ell'} = 0$. 
    Then (recall that $x_i = 2^i$), 
    \begin{eqnarray*} 
      | p_1 - q_1 | & = & \left| x_{\ell'} + \sum_{i=\ell}^{\ell'-1} 
               ( \beta_{p,i} - \beta_{q,i} ) x_i \right| \\ 
               & = & \left| 2^{\ell'} + \sum_{i=\ell}^{\ell'-1} 
               ( \beta_{p,i} - \beta_{q,i} ) 2^i \right| \\
        &\geq & 2^{\ell'} - \sum_{i=\ell}^{\ell'-1} 2^i  \\
          & = & 2^{\ell} \\
          & = & x_{\ell}. 
    \end{eqnarray*}
    We conclude that 
    $x_{\ell} \leq | p_1 - q_1 | < x_t$,
    which is equivalent to $\ell < t$. 
\end{proof}

\subsection{The case when $m > 3 \sqrt{n}$}

\begin{thm} \label{prop:toth-set}
    Let $n$ and $m$ be positive integers with $3 \sqrt{n} < m \leq n$.
    Then $f_2(n,m) = \Omega(n\log{(m^2/n)})$.
\end{thm}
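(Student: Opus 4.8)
The plan is to make the divide‑and‑conquer of the upper bound (Theorem~\ref{thm:odak-set}) constructive, using the $m=n$ construction of Theorem~\ref{thm:odak-set-lower} as a template and the dense case $m=O(\sqrt n)$ (Theorem~\ref{thm:smid-set} and the corollary following it) as the base of the recursion. Write $r=m^2/n$ and $s=n/m$; from $m>3\sqrt n$ we get $r>9$, and $m=rs$, $n=rs^2$. Since $\log(m/\sqrt n)=\tfrac12\log r$, it suffices to build, for the hyperplane positions $a_i=i$, a set $S$ of $n$ points that is balanced ($s$ points strictly between consecutive hyperplanes) with $\Omega(n\log r)$ distinct closest pairs.

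First I would set up the recursion
\[
f_2(n,m)\ \ge\ 2\,f_2\!\left(n/2,\,m/2\right)+c\,n,
\]
valid as long as $r\ge 16$, together with the base case $f_2(n_0,m_0)=\Omega(n_0)=\Omega(n_0\log r)$ when $r$ is a constant, obtained by embedding into the first $\Theta(\sqrt{n_0})$ slabs the dense construction of Theorem~\ref{thm:smid-set} (exactly as in the corollary following it) and padding the remaining slabs with far‑away points. Unrolling the recursion $\Theta(\log r)$ times gives the claimed $\Omega(n\log r)$. To realize one recursive step, I would build $S=S(n,m)$ by placing a recursively constructed $S(n/2,m/2)$ in the leftmost $m/2$ slabs, a copy of it translated upward by a large amount in the rightmost $m/2$ slabs, and then adding a family $X$ of $\Theta(n)$ pairs that straddle the median hyperplane $H$, each of which is to be the unique closest pair of some distinct slab straddling $H$.

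Next I would check the three properties that make such a step legal. (i) The closest pairs of the two halves persist in $S$: the halves occupy disjoint ranges of the first coordinate, so every slab contained in one half meets $S$ only in that half and its closest pair is unchanged. (ii) $X$ does not spoil the halves: the ``action'' of $X$ takes place far above both halves, so a short recursive pair is never beaten by an endpoint of $X$. (iii) Each pair of $X$ is the closest pair of its designated straddling slab: here I would process the target straddling slabs in order of \emph{decreasing} width and argue that inside the $i$‑th target slab the $i$‑th crossing pair is the unique short pair --- a shorter crossing pair comes with a strictly smaller slab, processed later, and a smaller slab cannot contain a larger one, so no earlier, shorter crossing pair lies entirely inside; meanwhile any pair with one endpoint in each half but in ``mismatched'' positions is vertically huge.

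The difficult point is (iii) together with the \emph{density} $s=n/m>1$, which is absent in the $m=n$ warm‑up. There the $\Theta(n)$ crossing pairs of a level are exactly the $n/2$ top‑scale hypercube edges, each spanning half the region, and this works because with one point per slab those edges occupy $n/2$ distinct slabs. With $s>1$ points per slab the naive analogue --- matching each left point with the corresponding point of the translated right copy --- yields only $\Theta(m)$ distinct straddling slabs, since the $s$ points sharing a slab produce crossing pairs lying in the \emph{same} straddling slab, of which only one can be a closest pair; a recursion with only $\Theta(m)$ new pairs per level delivers just $\Theta(n)$ in total. To recover the missing factor $s$ I would break the symmetry among the $s$ points of a slab, using the recursive structure of the halves together with the vertical shift between them: the crossing partner of each of these $s$ points should be chosen so that the resulting straddling slab has width matched to the ``scale'' at which that point already participates in its half, which forces the $s$ crossing pairs into $\Theta(s)$ distinct straddling slabs and hence yields $\Theta(ms)=\Theta(n)$ crossing closest pairs per level. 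Making this matching precise --- so that every crossing pair is simultaneously short enough to win its straddling slab and long enough not to disturb the recursion, which (as in the warm‑up) forces a carefully chosen super‑geometric separation between the vertical scales used at the different recursion levels --- is where most of the work lies, and I expect this bookkeeping, rather than any single idea, to be the principal obstacle.
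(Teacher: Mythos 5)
Your plan correctly isolates the crux of the theorem --- that with $s=n/m>1$ points per elementary slab, the naive ``pair each left point with its translate'' scheme collapses the $s$ crossing pairs of a slab into a single straddling slab, losing a factor of $s$ --- but it does not resolve it, and the resolution is the entire content of the proof. Saying that each of the $s$ points should get a crossing partner ``at the scale at which that point already participates in its half'' is not a construction: in a hypercube-set every point participates at every scale, so this does not single out $s$ pairwise distinct widths, and the widths must moreover be distinct \emph{integers} in $[1,m]$ that are simultaneously compatible with the recursive slab structure and with a global separation of vertical scales. The paper's mechanism is precisely this missing piece: it partitions the $s$ points of each elementary slab into groups $g=1,\dots,n/m$, gives group $g$ the widths $(2g-1)\cdot 2^i$ for $i<\lfloor\log(m/(2g-1))\rfloor$, and uses the unique factorization of an integer as (odd)$\times$(power of two) to make \emph{all} widths across all groups and scales pairwise distinct, after which a single global super-geometric ordering of the vector norms (decreasing in the horizontal extent) makes every hypercube edge the closest pair of its own slab. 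Without an explicit scheme of this kind, your step (iii) and the ``factor $s$'' recovery remain unproved, and you acknowledge as much.

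There is also a structural inconsistency in your setup that you would hit immediately upon trying to fill the gap. For a crossing pair to be the closest pair of a straddling slab it must be \emph{shorter} than every pair of the two recursive halves restricted to that slab --- in the warm-up construction the widest edges (those crossing the median) are the \emph{shortest} vectors, and the right half is the left half translated essentially horizontally, not ``upward by a large amount.'' If you translate the right half far upward, every pair with one endpoint in each half is long and can never win a straddling slab; if instead $X$ consists of new points placed ``far above both halves,'' you add $\Theta(n)$ points per level and exceed the budget of $n$ points over the $\Theta(\log(m/\sqrt n))$ levels unless the recursion is rebalanced, which you do not do. So the proposal is a reasonable outline of \emph{where} a proof must go, but the decisive construction is absent.
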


\begin{proof}
Our approach will be to use multiple scaled and shifted copies of the construction in Theorem~\ref{thm:odak-set-lower} to define a set $S$ of 
$n$ points in $\R^2$ for which $|A(S,m)| = \Omega(n\log{(m^2/{n})})$.

For $i=1,2,\ldots,m+1$, we take $a_i = i$ and consider the corresponding 
vertical hyperplane $H_i$. For each $i=1,2,\ldots,m$, the point set $S$ will 
contain exactly $n/m$ points in the vertical slab
$\llbracket H_i , H_{i+1} \rrbracket$. 

\paragraph{Hypercube-sets:}
Throughout the proof, we will use the following notation. 
Let $v_0, v_1, \ldots, v_{k-1}$ be a sequence of pairwise 
distinct vectors in 
the plane. The \emph{hypercube-set} defined by these vectors is the 
point set 
\[ Q(v_0,v_1,\ldots,v_{k-1}) = 
    \left\{\sum_{i=0}^{k-1} \beta_i v_i 
        : (\beta_0, \beta_1, \cdots, \beta_{k-1}) \in \{0,1\}^k \right\}.
\]

For each $g = 1,2,\ldots,n/m$, we define a hypercube-set $Q_g$: 
\begin{itemize}
    \item Let $k_g = \lfloor \log (m/(2g-1)) \rfloor$.
    \item For each $i=0,1,\ldots,k_g-1$, let 
          $x_{g,i} = (2 g - 1) \cdot 2^i$
          and let 
          $v_{g,i} = ( x_{g,i} , y_{g,i} )$
          be a vector, whose second coordinate $y_{g,i}$ will be 
          defined later. 
    \item Let $Q_g = Q( v_{g,0} , v_{g,1} , \ldots, v_{g,k_g-1})$.
\end{itemize}
Since, for integers $g$, $g'$, $i$, and $i'$, 
$(2g-1) \cdot 2^i = (2g'-1) \cdot 2^{i'}$ if and only if 
$g=g'$ and $i = i'$, then all values $x_{g,i}$ are pairwise distinct. 

To define the values $y_{g,i}$, we sort all vectors $v_{g,i}$ by their 
first coordinates. We go through the sorted sequence in decreasing 
order:
\begin{itemize}
    \item For the vector with the largest first coordinate, we set its 
    $y$-value to zero. 
    \item For each subsequent vector $v_{g,i}$, we set $y_{g,i}$ to be an 
    integer such that 
    \begin{equation} \label{eq2} 
      || v_{g,i} || > 2 \sum_{g',i'} || v_{g',i'} || ,        
    \end{equation}
    where the summation is over all pairs $g',i'$ for which $y_{g',i'}$ 
    has already been defined (i.e., $x_{g,i} < x_{g',i'}$). 
\end{itemize}

\paragraph{Using the hypercube-sets to define a preliminary point set $S'$:}

We choose pairwise distinct real numbers $0 < \eps_g < 1$, for 
$g = 1,2,\ldots,n/m$, and set 
\[ \Delta = 1 + \max \{ \DIAM (Q_g) : 1 \leq g \leq n/m \},
\]
where $\DIAM(Q_g)$ denotes the diameter of the point set $Q_g$. 

For each $g=1,2,\ldots,n/m$ and $i=1,2,\ldots,2g-1$, let 
\[ S_{g,i} = Q_g + 
    ( i + \eps_g , 2 ( (g-1)^2 + i-1 ) \Delta ) ,\]
that is, $S_{g,i}$ is the translate of $Q_g$ by the vector $( i + \eps_g , 2 ( (g-1)^2 + i-1 ) \Delta )$.
We define $S'$ to be the union of all these sets $S_{g,i}$, i.e., 
\[ S' = \bigcup_{g=1}^{n/m} \bigcup_{i=1}^{2g-1} S_{g,i} . 
\]
Note that the sets $S_{g,i}$ are pairwise  disjoint: Indeed if $g\neq g'$ or $i\neq i'$, then the $y$-projections of $S_{g,i}$ and $S_{g',i'}$  (i.e., the sets of second coordinates) are disjoint by construction. 
Consequently, the size of the union of these point sets satisfies 
\[  |S'| 
= \sum_{g=1}^{n/m} \sum_{i=1}^{2g-1} |S_{g,i}|
= \sum_{g=1}^{n/m} \sum_{i=1}^{2g-1} 2^{k_g} 
      \leq \sum_{g=1}^{n/m} (2g-1) \cdot \frac{m}{2g-1} 
     =  n .
\]
For each $1 \le g \le n/m$, by construction of $Q_g$ and the fact that the sets $S_{g,i}$ are disjoint translations of $Q_g$, each slab $\llbracket H_j , H_{j+1} \rrbracket$ contains at most one point of $\bigcup_{i=1}^{2g-1} S_{g,i}$. Therefore, each slab $\llbracket H_j , H_{j+1} \rrbracket$ contains at most $n/m$ points of $S'$. 

\paragraph{The final point set $S$:}
To obtain the final point set $S$ of size $n$, we add $n-|S'|$ points to 
$S'$ such that 
each slab $\llbracket H_j , H_{j+1} \rrbracket$ contains exactly 
$n/m$ points of $S$, and the added points are sufficiently far from each 
other and from all points of $S'$.

\paragraph{The main claim implying correctness:}

In the rest of this proof, we will prove the following claim: For each 
$g=1,2,\ldots,n/m$, consider two binary strings of length $k_g$ that 
differ in exactly one position (recall that the number of such pairs of 
strings is equal to $k_g \cdot 2^{k_g-1}$). These strings correspond to
two points of the hypercube-set $Q_g$. Thus, for any 
$i=1,2,\ldots,2g-1$, they
correspond to two points, say $r$ and $s$, in the set $S_{g,i}$. 
We claim that $r$ and $s$ form the closest pair in the set 
$\llbracket H_{\lfloor r_1 \rfloor} , H_{\lceil s_1 \rceil} \rrbracket \cap S$, where $r_1$ and $s_1$ are
the first coordinates of $r$ and $s$, respectively (assuming that 
$r_1 < s_1$). Note that we take the floor and the ceiling, because $r_1$ and $s_1$ are not integers. 

This claim will imply that 
\[ f_2(n,m) \geq |A(S,m)| \geq 
  \sum_{g=1}^{n/m} \sum_{i=1}^{2g-1} k_g \cdot 2^{k_g-1} .
\]
Since 
$k_g > \log \left( \frac{m}{2g-1} \right) - 1$, 
we get 
\begin{eqnarray*}
    f_2(n,m) & \geq & 
    \sum_{g=1}^{n/m} (2g-1) 
       \left( \log \left( \frac{m}{2g-1} \right) - 1 \right) \cdot 
          \frac{m}{4(2g-1)} \\ 
    & = & \sum_{g=1}^{n/m} \frac{m}{4} 
                    \log \left( \frac{m}{2g-1} \right) - 
                    \sum_{g=1}^{n/m} \frac{m}{4} . 
\end{eqnarray*}
Since each term in the first summation is larger than the last term, 
which is larger than $(m/4) \log ( m^2 / (2n) )$, we get 
\begin{eqnarray*} 
 f_2(n,m) & \geq & \frac{n}{m} \cdot \frac{m}{4} \log ( m^2 / (2n) ) 
          - \frac{n}{4} = \frac{n}{4} 
          \left( \log (m^2/(2n)) - 1 \right) . 
\end{eqnarray*}  
Since $m > 3 \sqrt{n}$, 
\[ \log (m^2/(2n)) - 1 = \Omega (\log (m^2/n)) . 
\]
We conclude that 
\[ f_2(n,m) = \Omega (n \log (m^2/n)) . 
\]

\paragraph{Proof of the main claim:}
Let $g$ and $i$ be integers with $1 \leq g \leq n/m$ and 
$1 \leq i \leq 2g-1$. Consider two binary strings of length $k_g$ that 
differ in exactly one position; denote this position by $t$. 
Let $r$ and $s$ be the two corresponding points of $S_{g,i}$. 
Note that the vector $v_{g,t}$ is equal to either $r-s$ or $s-r$. 

We may assume, without loss of generality, that $r_1 < s_1$.  
To prove that $r$ and $s$ form the closest pair in the set 
$\llbracket H_{\lfloor r_1 \rfloor} , H_{\lceil s_1 \rceil} \rrbracket \cap S$, we consider an 
arbitrary pair $p$ and $q$ of points in 
$\llbracket H_{\lfloor r_1 \rfloor} , H_{\lceil s_1 \rceil} \rrbracket \cap S$ such that 
$\{p,q\} \neq \{r,s\}$. We will show that $|| r-s || < || p-q ||$.

Let $g'$, $g''$, $i'$, and $i''$ be such that 
$p \in S_{g',i'}$ and $q \in S_{g'',i''}$. If $g' \neq g''$ or 
$i' \neq i''$, then 
\[ \| p-q \| \geq | p_2 - q_2 | \geq \Delta > 
     \DIAM(Q_g) \geq \| r-s \| . 
\]
In the rest of the proof, we assume that $g' = g''$ and $i' = i''$. 
Since both $p$ and $q$ are in $S_{g',i'}$, we can write them as 
\[ p = \sum_{j=0}^{k_{g'}-1} \beta_{p,j} v_{g',j} + 
   ( i' + \eps_{g'} , 2 ( (g'-1)^2 + i'-1 ) \Delta ) 
\]
and 
\[ q = \sum_{j=0}^{k_{g'}-1} \beta_{q,j} v_{g',j} + 
   ( i' + \eps_{g'} , 2 ( (g'-1)^2 + i'-1 ) \Delta ) . 
\]
Let $\ell$ be the smallest index such that $0 \leq \ell \leq k_{g'}-1$ 
and $\beta_{p,\ell} \neq \beta_{q,\ell}$. Using the triangle 
inequality and (\ref{eq2}), we have 
\[
 || p-q ||  =  
 \left| \left| \sum_{j=\ell}^{k_{g'}-1} ( \beta_{p,j} - \beta_{q,j} ) 
                            v_{g',j} 
 \right| \right| 
  \geq  || v_{g',\ell} || - 
          \sum_{j=\ell+1}^{k_{g'}-1} || v_{g',j}|| 
     >  \frac{1}{2} \cdot || v_{g',\ell} || . 
\]

Thus, it suffices to show that 
\[ || v_{g',\ell} || \geq 2 \cdot || r-s || = 2 \cdot || v_{g,t} || . 
\]
Since both $p$ and $q$ are in $S_{g',i'}$, we have $|p_1 - q_1| \le |r_1-s_1|$ (where $r$ and $s$ are each at distance $\eps_g$ from the left boundary of their corresponding slabs, and $p$ and $q$ are at distance $\eps_{g'}$ from the left boundary of their corresponding slabs). Let $\ell'$ be the largest index such that $0 \leq \ell' \leq k_{g'}-1$ 
and $\beta_{p,\ell'} \neq \beta_{q,\ell'}$. We have 
\begin{eqnarray*}
    x_{g,t} & = & | r_1 - s_1| \\ 
    & \geq & | p_1 - q_1 | \\
    & = & \left| \sum_{j=\ell}^{\ell'} ( \beta_{p,j} - \beta_{q,j} ) 
                       x_{g',j} 
           \right| \\ 
    & \geq & x_{g',\ell'} - \sum_{j=\ell}^{\ell'-1} x_{g',j} \\ 
    & = & ( 2g'-1 ) 
    \left( 2^{\ell'} - \sum_{j=\ell}^{\ell'-1} 2^j
    \right) \\ 
     & = & ( 2g'-1 ) \cdot 2^{\ell} \\ 
     & = & x_{g',\ell} . 
\end{eqnarray*}
Since $\{p,q\} \neq \{r,s\}$, $x_{g,t}$ cannot be equal to $x_{g',\ell}$. 
Therefore, $x_{g,t} > x_{g',\ell}$. Using (\ref{eq2}), it then follows that 
$|| v_{g',\ell} || \geq 2 \cdot || v_{g,t} ||$. 
\end{proof}

\section{The Data Structure} \label{data-structure}

In this section, we will present a data structure for vertical closest
pair queries. Our data structure will use the results in the following three lemmas. 

\begin{lem} 
\label{lemshortest}
Let $S$ be a set of $n$ points in $\R^d$, and let $L$ be a set of $k$ line segments such that the endpoints of each segment belong to $S$. 
There exists a data structure of size $O(k)$, such that for any two 
real numbers $a$ and $b$ with $a<b$, the shortest segment in $L$ that 
is completely inside the vertical slab $\llbracket H_a,H_b \rrbracket$
can be reported in $O(\log n)$ time.
\end{lem}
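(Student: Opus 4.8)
The plan is to reduce the problem to a one-dimensional range-minimum query after a suitable preprocessing of the segments. Sort the $2k$ endpoints of the segments in $L$ by their first coordinate, together with the $n$ points of $S$; this lets us associate to each real number $a$ (more precisely, to each "combinatorial position" of $a$ relative to the sorted $x$-coordinates) a well-defined rank. Since a segment $uv$ with $u_1 < v_1$ lies inside $\llbracket H_a, H_b \rrbracket$ exactly when $a \le u_1$ and $v_1 \le b$, a query $(a,b)$ becomes: among all segments whose left-endpoint $x$-rank is at least $\mathrm{rank}(a)$ and whose right-endpoint $x$-rank is at most $\mathrm{rank}(b)$, find the one of minimum Euclidean length. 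This is a 2-sided (dominance) range-minimum query in the plane whose two coordinates are (left-rank, right-rank), with weights equal to segment lengths.

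The key observation that keeps the space linear is that this particular dominance-min query is equivalent to a \emph{staircase} or \emph{Pareto-front} structure. First I would build, over the segments sorted by right-endpoint rank, the running prefix-minimum of segment length restricted to "left-rank $\ge$ threshold" — but since the threshold varies, instead I process segments in increasing order of right-endpoint rank and maintain the \emph{upper-left staircase}: a segment is "useful" only if no other segment has both a larger left-rank and a smaller (or equal) right-rank \emph{and} a shorter length. Concretely, sweep the right-rank axis; at each prefix, the relevant candidates for a query form a monotone sequence (as left-rank increases, the best achievable length is non-increasing), so we can store, for each right-rank prefix, only the staircase of (left-rank, best-length-so-far) breakpoints. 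A standard persistence argument (a persistent sorted list, or simply recording the $O(1)$ amortized changes to the staircase at each step) gives total size $O(n+k)$, and a query is answered by binary searching the version indexed by $\mathrm{rank}(b)$ for the first breakpoint with left-rank $\ge \mathrm{rank}(a)$, in $O(\log n)$ time. Alternatively, and perhaps more cleanly for the write-up, one can phrase this as a segment tree over right-ranks where each node stores the sorted-by-left-rank staircase of the segments in its subtree; a query decomposes $[\,\cdot\,,\mathrm{rank}(b)]$ into $O(\log n)$ canonical nodes and does one binary search per node, which is $O(\log^2 n)$ — so to hit $O(\log n)$ one does need the persistence/fractional-cascading refinement.

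The main obstacle is getting the query time down to $O(\log n)$ rather than $O(\log^2 n)$ while keeping space $O(n+k)$: the naive segment-tree-of-staircases pays a log for tree descent and a log for searching each node's staircase. The fix is either (i) the persistent-staircase construction sketched above, which has a single $O(\log n)$ binary search, or (ii) fractional cascading across the staircases along a root-to-leaf path of the segment tree, which amortizes the per-node searches to $O(1)$ after the first. Either route is standard, so the substantive content of the proof is verifying that the "upper-left staircase" correctly captures all and only the segments that can be the answer to some dominance-min query, and that it changes by $O(1)$ amortized insertions/deletions as the right-rank sweep advances — both of which follow from the monotonicity that a shorter segment dominating another in both ranks makes the latter permanently irrelevant. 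I would also remark that the $O(n)$ term (rather than merely $O(k)$) in the space bound is there only so that the ranks of query coordinates $a,b$ can be located among the points of $S$ by binary search; if one is content with $a,b$ given as ranks, the structure is $O(k)$.
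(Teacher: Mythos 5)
Your construction is correct, but note that the paper does not actually prove this lemma: it simply cites Xue \emph{et al.}~\cite{DBLP:journals/dcg/XueLRJ22} for $d=2$ and asserts that their construction generalizes. Your self-contained argument is essentially the standard one (and, as far as I can tell, the same in spirit as the cited Section~3): map each segment $uv$ with $u_1<v_1$ to the point $(u_1,v_1)$ weighted by its length, observe that containment in $\llbracket H_a,H_b\rrbracket$ is the dominance condition $u_1\ge a$, $v_1\le b$, and answer the 2-sided range-minimum query with a persistent Pareto staircase built by sweeping the right coordinate. Your accounting is right: each insertion truncates an interval of the monotone staircase to a constant value, each breakpoint dies at most once, so the total number of structural changes is $O(k)$ and a persistent search tree gives $O(n+k)$ space with a single $O(\log n)$ search per query; you are also right that the naive segment-tree-of-staircases only gives $O(\log^2 n)$ without this refinement. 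One small slip: for a fixed prefix of right-ranks, the best achievable length as a function of the left-rank threshold is non-\emph{decreasing} (raising the threshold shrinks the candidate set), not non-increasing as you wrote; the staircase argument is unaffected. A point worth making explicit, since it is exactly what the paper waves at with ``a careful analysis,'' is that your reduction uses only the first coordinates of the endpoints and the segment lengths as abstract weights, so the construction is manifestly dimension-independent.
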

The proof follows from 
Xue \emph{et al.}~\cite[Section 3]{DBLP:journals/dcg/XueLRJ22}, who proved 
a similar result for the case when $d=2$. For completeness, we include a 
proof of the lemma here.
\begin{proof}
Consider a line segment $pq$ in $L$, where 
$p=(p_1,p_2,\ldots,p_d)$, $q=(q_1,q_2,\ldots,q_d)$, and $p_1 \leq q_1$. 
We map this segment to the north-west quadrant 
\[ \NW(p_1,q_1) = \{ (x_1,x_2) \in \R^2 : x_1 \leq p_1 \mbox{ and }
               x_2 \geq q_1 \} . 
\]
Observe that the segment $pq$ is in the vertical slab 
$\llbracket H_a,H_b \rrbracket$ if and only if the point $(a,b)$ is in 
$\NW(p_1,q_1)$.

Let $Q = \{ \NW(p_1,q_1) : pq \in L \}$. 
We assign each quadrant $\NW(p_1,q_1)$ a weight which is equal to the distance 
$|| p-q ||$ between the points $p$ and $q$. Finding 
the shortest segment in $L$ that 
is completely inside the vertical slab $\llbracket H_a,H_b \rrbracket$ 
is equivalent to finding the quadrant in $Q$ of minimum weight that contains
the point $(a,b)$.

We lift each quadrant $\NW(p_1,q_1)$ in $Q$ to $\R^3$ 
by giving it $||p-q||$ as its third coordinate. Consider the lower envelope 
of these lifted quadrants. This is a planar subdivision of size 
$O(k)$. Finding the shortest segment inside $\llbracket H_a,H_b \rrbracket$
is equivalent to locating the point $(a,b)$ in this planar subdivision. 
For a planar subdivision of size $O(k)$, there are point location data structures 
of $O(k)$ size 
and $O(\log k)$ query time~\cite{Snoeyink17}. The query time is $O(\log n)$ because $k \leq {n \choose 2}$.
\end{proof}

The next lemma is due to Mehlhorn~\cite[page~44]{m-mdscg-84}; see Smid~\cite{smid} for a complete analysis of this data structure.

\begin{lem} \label{lem:range-report}
Let $S$ be a set of $n$ points in $\R^d$ and let $\eps>0$ be a real constant. There exists a data structure of size $O(n)$, such that for any axis-parallel rectangular box $B$, all points in $B \cap S$ can be reported in $O(n^\eps + |B \cap S|)$ time. 
\end{lem}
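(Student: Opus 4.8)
The plan is to recall the standard construction of a linear-space orthogonal range reporting structure and note why its query bound matches the claim. First I would build, in a top-level recursion on the first coordinate, a balanced binary tree $T_1$ on the points of $S$ sorted by $x_1$; each node stores the median $x_1$-coordinate of the points in its subtree, and at each node we recursively build an analogous structure on the remaining $d-1$ coordinates for the subset of points in that subtree. To keep the total size linear rather than $O(n\log^{d-1}n)$, the key trick (Mehlhorn's) is to stop the recursion early: once a subtree contains at most $n^{\eps/d}$ (say) points, we do not recurse on the next coordinate but simply store the point list explicitly. A routine calculation shows the total space is $O(n)$ for any fixed $\eps>0$, essentially because at each of the $d$ levels of coordinate-recursion the branching stops after $O(\log n)$ depth is ``spent'' reducing the subset size to the threshold, and the thresholds are chosen so that the product of the $d$ blow-ups is $O(n^{\eps})$.

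Next I would describe the query. Given an axis-parallel box $B$, we first locate in $T_1$ the $O(\log n)$ canonical subtrees whose union is exactly the set of points of $S$ with $x_1 \in B$; for each such canonical node we recurse into its secondary structure on coordinate $x_2$, and so on. Along each of the $d$ coordinates we either descend into canonical subtrees (paying an $O(\log n)$ factor in the number of nodes visited) or, when we hit a node below the size threshold, scan its stored list of at most $n^{\eps/d}$ points directly, testing each against $B$. The number of canonical nodes examined is $O(\log^{d} n)$ in the worst case before the thresholds kick in; more carefully, the ``overhead'' (work not charged to output points) is bounded by the number of explicitly stored lists we must scan, each of size at most $n^{\eps/d}$, and there are at most $O(\log^{d-1} n)$ of them along the relevant branches, for a total of $O(n^{\eps/d}\log^{d-1} n) = O(n^{\eps})$ by adjusting constants. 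Every point reported lies in $B\cap S$ and is reported once, so the output-sensitive term is $O(|B\cap S|)$, giving the claimed $O(n^{\eps}+|B\cap S|)$ query time.

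The only subtlety—and the step I expect to need the most care—is bookkeeping the interplay between the size thresholds at the $d$ successive coordinate recursions so that simultaneously (i) the total space telescopes to $O(n)$ and (ii) the non-output query work telescopes to $O(n^{\eps})$; these two requirements pull in opposite directions (larger thresholds help space, smaller thresholds help query overhead away from the leaves but create more explicit lists), and one must verify a single choice of threshold as a function of $n$, $d$, and $\eps$ works for both. Since this is precisely the analysis carried out in Mehlhorn~\cite[page~44]{m-mdscg-84} and in full detail by Smid~\cite{smid}, I would simply cite those sources for the verification rather than reproduce the calculation here.
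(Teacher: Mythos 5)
The paper does not actually prove this lemma; it attributes the data structure to Mehlhorn and points to Smid's note for a complete analysis, exactly as you do in your closing paragraph, so the citation part of your proposal matches the paper. The problem is that the construction you sketch on the way to that citation is not Mehlhorn's structure and does not achieve linear space. In a binary range tree on the first coordinate in which every node of subtree size greater than the threshold $n^{\eps/d}$ carries a secondary $(d-1)$-dimensional structure on all points of its subtree, each point of $S$ lies in $\Theta(\log n)$ such nodes (namely all of its ancestors in the top $(1-\eps/d)\log n$ levels), so the secondary structures at the very first level of the coordinate recursion already occupy $\Theta(n\log n)$ space. Truncating the recursion at small subtrees only removes the bottom $\Theta(\eps \log n)$ levels and cannot repair this; the space does not ``telescope to $O(n)$'' as you assert, and no choice of threshold fixes it, because the blow-up comes from the top of the tree, not the bottom.

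The mechanism that actually yields $O(n)$ space is different: the primary tree on the first coordinate is given fan-out roughly $n^{\delta}$ (with $\delta$ proportional to $\eps/d$), hence constant depth $O(1/\delta)$, so each point appears in only $O(1)$ secondary structures per coordinate and the space recurrence over the $d$ coordinates solves to $O(n)$. The price is that a one-dimensional range now decomposes into $O(n^{\delta}/\delta)$ canonical nodes instead of $O(\log n)$, which after $d$ levels of coordinate recursion yields the $O(n^{\eps})$ non-output query overhead; that is where the $n^{\eps}$ term genuinely comes from, not from scanning leaf lists of size $n^{\eps/d}$. If you wish to include a sketch rather than a bare citation, it should describe this constant-depth, high-fan-out decomposition.
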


The last tool that we need is a standard sparsity property. 

\begin{lem} \label{obs:constant-points}
    Let $r>0$ be a real number, and let $X$ be a set of points in $\R^d$ that are contained in an $r \times 2r \times 2r \times \cdots \times 2r$ rectangular box $B$. If the distance of the closest pair of points in $X$ is at least $r$, then $|X| \le 2^{d-1} \cdot c^d$, where $c = 1 + \lceil \sqrt{d} \rceil$.
\end{lem}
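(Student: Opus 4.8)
The plan is to prove this by a volume/packing argument. First I would partition the box $B$, which has side lengths $r \times 2r \times \cdots \times 2r$ (the first side of length $r$, the remaining $d-1$ sides of length $2r$), into a grid of small axis-parallel cells. Along the first coordinate I would use a single interval of length $r$ (no subdivision needed), and along each of the other $d-1$ coordinates I would cut the length-$2r$ side into $c = 1 + \lceil\sqrt{d}\rceil$ equal subintervals, each of length $2r/c$. This yields a grid of $1 \cdot c^{d-1} = c^{d-1}$ cells, each a box of dimensions $r \times (2r/c) \times \cdots \times (2r/c)$.

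The key step is to bound the number of points of $X$ that can lie in a single grid cell. The diameter of such a cell is $\sqrt{r^2 + (d-1)(2r/c)^2}$. I would like this to be less than $r$ so that each cell contains at most one point of $X$ — but $\sqrt{r^2 + \text{something positive}}$ already exceeds $r$, so that is too much to ask. Instead, since $c = 1 + \lceil\sqrt{d}\rceil \geq 1 + \sqrt{d}$, we have $2r/c \leq 2r/(1+\sqrt d)$, and one checks $(d-1)(2r/c)^2$ is comparable to a constant times $r^2$; in fact the cell diameter is at most $r\sqrt{1 + 4(d-1)/c^2} \le r\sqrt{1+4} $ roughly, i.e. on the order of $r$. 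So a single cell need not contain only one point, but it contains at most a bounded number of points of a set with minimum pairwise distance $\geq r$. Concretely, a cell of bounded diameter $O(r)$ containing points at pairwise distance $\geq r$ holds at most a constant number of points; the paper's bound $2^{d-1}$ per cell (giving the total $2^{d-1} c^d$... wait, $c^{d-1} \cdot 2^{d-1}$ would be the natural product, and since $c \ge 2$ we have $c^{d-1} 2^{d-1} \le c^{d-1} c \cdot$... ) suggests the intended refinement: split the first coordinate's length-$r$ side too, or rather account that each length-$2r$ side split into $c$ pieces of length $2r/c < r$ means each cell has all $d-1$ of those sides shorter than $r$, while the first side has length exactly $r$; cutting that first side in half gives cells of dimensions $(r/2)\times(2r/c)\times\cdots$ with diameter $< r$, hence at most one point each, and $2 \cdot c^{d-1}$ cells — but the claimed bound is $2^{d-1} c^d$, so I would instead cut each coordinate to get diameter below $r$: the first side into $2$ parts, each remaining side into $c$ parts, no wait that gives $2c^{d-1} \le 2^{d-1}c^d$ trivially. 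I would present whichever grid makes the diameter bound cleanest and absorb the slack into the stated constant.

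So the steps in order are: (1) set up the grid on $B$ with $c = 1 + \lceil\sqrt d\rceil$ subdivisions per long side and a constant number of subdivisions on the short side, chosen so that each cell has diameter strictly less than $r$; (2) observe that any cell of diameter $< r$ contains at most one point of $X$, because any two points of $X$ are at distance at least $r$; (3) multiply the number of cells by $1$ to conclude $|X|$ is at most the number of cells, which is at most $2^{d-1} c^d$; (4) verify the diameter computation, i.e. that $\sqrt{(r/\text{(short divisor)})^2 + (d-1)(2r/c)^2} < r$, which reduces to the inequality $(d-1)\cdot 4/c^2 < 1 - 1/(\text{short divisor})^2$, and this holds because $c^2 \ge (1+\sqrt d)^2 > 4(d-1)$ for all $d \ge 2$ with room to spare. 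I expect the only real obstacle to be bookkeeping: choosing the subdivision of the first (length-$r$) coordinate so that the diameter bound works out while the total cell count stays within the stated $2^{d-1} c^d$; since $c \ge 3$ there is ample slack, so this is routine rather than hard.
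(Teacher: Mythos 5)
There is a genuine gap in your verification step, and it matters for all $d \geq 5$. Your grid subdivides each length-$2r$ side into only $c$ pieces, so the long cell sides have length $2r/c$, and your diameter bound rests on the claimed inequality $c^2 \ge (1+\sqrt d)^2 > 4(d-1)$, asserted to hold ``with room to spare.'' It does not: $(1+\sqrt d)^2 = d + 2\sqrt d + 1$ grows linearly in $d$ while $4(d-1)$ grows four times as fast, so already for $d=5$ one has $c = 1+\lceil\sqrt 5\rceil = 4$, $c^2 = 16 = 4(d-1)$, and for $d \ge 6$ the inequality reverses outright (e.g.\ $d=10$ gives $c^2 = 25 < 36$). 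In those cases the contribution of the $d-1$ long sides alone already makes the cell diameter at least $\sqrt{(d-1)}\cdot 2r/c \ge r$, so a cell can contain two points of $X$ at distance $\ge r$ and your ``at most one point per cell'' step fails. Your fallback remark that a cell of diameter $O(r)$ contains ``at most a constant number'' of $r$-separated points is true but is essentially the statement being proved, so it cannot be invoked without a further packing argument.

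The fix is the one the paper uses, and it is what your own hedging was circling around: subdivide \emph{every} side into pieces of length $r/c$, i.e.\ $c$ pieces along the length-$r$ side and $2c$ pieces along each length-$2r$ side. This yields exactly $c\cdot(2c)^{d-1} = 2^{d-1}c^d$ hypercubes of side $r/c$, each of diameter $\sqrt d\, r/c < r$ since $c > \sqrt d$; hence each contains at most one point of $X$, giving the stated bound with no slack to absorb. Your overall strategy (grid plus one-point-per-cell) is the right one; the error is purely in the choice of cell size, but as written the proof is wrong in high dimensions rather than merely unpolished.
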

\begin{proof}
Partition $B$ into $2^{d-1} \cdot c^d$ hypercubes, each one being an
$\frac{r}{c} \times\frac{r}{c}  \times \cdots \times \frac{r}{c}$ 
box. Since the diameter of any such box is $\sqrt{d} \cdot r/c$, which is 
less than $r$, it contains at most one point of $X$.
\end{proof}

In the rest of this section, we will prove the following result.

\begin{thm} \label{thm:main2}
    Let $S$ be a set of $n$ points in $\R^d$, let $m$ be an integer 
    with $1 \leq m \leq n$, and let $\eps>0$ be a real constant. 
    There exists a data structure of size $O(n + f_d(n,m))$ such that 
    for any two real numbers $a$ and $b$ with $a<b$, the closest pair 
    in the vertical slab $\llbracket H_a,H_b \rrbracket$ can be 
    reported in $O(n^{1+\eps} /m)$ time.
\end{thm}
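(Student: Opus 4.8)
\textbf{Proof plan for Theorem~\ref{thm:main2}.}
The plan is to preprocess the $m+1$ canonical hyperplanes $H_{a_1},\ldots,H_{a_{m+1}}$ that split $S$ into groups of $n/m$ consecutive points, store the set of all worst-case closest pairs over canonical slabs in a shortest-segment structure, and handle an arbitrary query slab $\llbracket H_a, H_b\rrbracket$ by combining the answer over the largest enclosed canonical subslab with a small amount of ``boundary'' work near $H_a$ and $H_b$. Concretely, given $S$, I would first sort by first coordinate and compute the $a_i$'s; then form the set of segments corresponding to $A(S,m)$ (of size $f_d(n,m)$ in the worst case) and build the structure of Lemma~\ref{lemshortest} on it, giving size $O(n + f_d(n,m))$ and $O(\log n)$ time to find the shortest such segment lying inside any canonical slab. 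In addition I would store, for each $i$, the $n/m$ points in the strip $\llbracket H_{a_i}, H_{a_{i+1}}\rrbracket$ in a range-reporting structure (Lemma~\ref{lem:range-report}) and in a plain array, plus a structure supporting $O(\log n)$ computation of the in-strip closest pair; all of this is $O(n)$ total.

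For a query $\llbracket H_a, H_b\rrbracket$, let $H_{a_i}$ be the first canonical hyperplane to the right of $a$ and $H_{a_j}$ the last one to the left of $b$ (found by binary search). The candidates for the closest pair fall into three kinds: (1) both points in the canonical subslab $\llbracket H_{a_i}, H_{a_j}\rrbracket$ — here the closest pair is an element of $A(S,m)$, namely $\CP(S,H_{a_i},H_{a_j})$, but since an \emph{arbitrary} enclosed pair realizing it might not be a segment stored in $\mathcal{CP}$ for \emph{this} subslab, I instead query Lemma~\ref{lemshortest} for the shortest segment of $A(S,m)$ contained in $\llbracket H_a, H_b\rrbracket$, which by definition of $A(S,m)$ equals the true closest-pair distance of $S$ inside $\llbracket H_{a_i}, H_{a_j}\rrbracket$; (2) both points in the ``fringe'' strips $\llbracket H_a, H_{a_i}\rrbracket$ or $\llbracket H_{a_j}, H_b\rrbracket$, each of which is contained in one canonical strip and hence contains at most $n/m$ points, so their closest pairs can be found by brute force in $O((n/m)^2)$, or faster; (3) a ``crossing'' pair with one point in a fringe strip and the other anywhere in $\llbracket H_a, H_b\rrbracket\cap S$. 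The main work is kind (3): for each of the $O(n/m)$ points $p$ in a fringe strip, I would use a standard packing/sparsity argument (in the spirit of Lemma~\ref{obs:constant-points}) — if the current best distance is $\delta$, then a box of side $O(\delta)$ around $p$ contains only $O(1)$ points of any set whose closest pair is $\geq\delta$, so after an initial bound on $\delta$ from kinds (1)–(2), I reduce the search for a partner of $p$ to reporting points of $S$ in a constant number of $O(\delta)$-sized boxes via Lemma~\ref{lem:range-report}, in $O(n^\eps + \text{output})$ time, and the output is $O(1)$ per box once $\delta$ is the running minimum.

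Summing up: binary searches and the Lemma~\ref{lemshortest} query cost $O(\log n)$; kind (2) costs $O((n/m)^2)=O(n^2/m^2)\le O(n^{1+\eps}/m)$ whenever $n/m \le n^{\eps}$, and otherwise $m$ is small enough that $n^{1+\eps}/m \ge n^{\eps}\cdot m \ge \ldots$ — one checks the two regimes $m \le n^{1-\eps}$ and $m > n^{1-\eps}$ separately, the crossover being exactly where $(n/m)^2$ and $n^\eps\cdot(n/m)$ balance; kind (3) costs $O(n/m)$ range queries of $O(n^{\eps})$ each, i.e.\ $O(n^{1+\eps}/m)$ after adjusting $\eps$ by a constant factor. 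The main obstacle — and the place the argument needs the most care — is kind (3): one must first secure a \emph{finite, good enough} upper bound $\delta$ on the closest-pair distance \emph{before} doing the range queries (so that the reporting output stays $O(1)$ per box), and one must verify that the $O(\delta)$-box around each fringe point really captures every crossing pair that could beat $\delta$, which follows because any such pair has horizontal extent at most the slab width but the relevant bound is on Euclidean distance, so the box must be taken with respect to $\delta$, not the slab width. Combining, we get $O(n+f_d(n,m))$ space and $O(n^{1+\eps}/m)$ query time, as claimed; Theorem~\ref{thm-main2} then follows by choosing $m=\Theta(\sqrt n)$ so that $f_d(n,m)=O(n)$ and $n^{1+\eps}/m = O(n^{1/2+\eps})$.
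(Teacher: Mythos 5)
Your proposal follows essentially the same route as the paper: the same three stored components (the array of canonical hyperplanes, the shortest-segment structure of Lemma~\ref{lemshortest} built on the segments of $A(S,m)$, and the range-reporting structure of Lemma~\ref{lem:range-report}), the same decomposition of the query slab into a canonical middle part and two fringe strips, and the same packing-plus-range-reporting treatment of crossing pairs, with the $\delta$-bound obtained first so that each box reports $O(1)$ points. Your discussion of case (1) is also fine: the shortest stored segment contained in $\llbracket H_a,H_b\rrbracket$ need not \emph{equal} the closest-pair distance of the middle slab (a stored segment with an endpoint in a fringe could be shorter), but it is always a valid candidate pair and is at most that distance, so correctness is unaffected.

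The one genuine gap is your case (2), the closest pair within the fringe strips. Brute force costs $\Theta((n/m)^2)$, which exceeds the claimed $O(n^{1+\eps}/m)$ exactly when $m \le n^{1-\eps}$ --- and this includes the only regime that matters for Theorem~\ref{thm-main2}, namely $m=\Theta(\sqrt n)$, where brute force costs $\Theta(n)$ and destroys the $O(n^{1/2+\eps})$ query bound. Your ``two regimes'' discussion trails off without supplying the ``faster'' method. The fix is to run the classical $O(N\log N)$ divide-and-conquer closest-pair algorithm (Bentley--Shamos) on the $O(n/m)$ fringe points at query time, which costs $O((n/m)\log(n/m)) = O(n^{1+\eps}/m)$ unconditionally; this is what the paper does, both for the fringe set $S_{AC}$ and for the degenerate case in which the query slab contains no canonical hyperplane at all (a case you should also state explicitly). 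With that substitution your argument matches the paper's proof.
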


\begin{proof}
Let $a_1 < a_2 < \cdots < a_{m+1}$ be real numbers such that for each 
$i = 1,2,\ldots,m$, the vertical slab 
$\llbracket H_{a_i} , H_{a_{i+1}} \rrbracket$ contains $n/m$ points 
of $S$. Let $k = |A(S,m)|$. Note that $k \leq f_d(n,m)$. Our data structure 
consists of the following components: 

\begin{itemize} 
\item An array storing the numbers $a_1,a_2,\ldots,a_{m+1}$. For each 
$i=1,2,\ldots,m$, the $i$-th entry stores, besides the number $a_i$, 
a list of all points in 
$\llbracket H_{a_i} , H_{a_{i+1}} \rrbracket \cap S$. 
\item The data structure of Lemma~\ref{lemshortest}, where 
$L$ is the set of line segments corresponding to the pairs in 
$\{ \CP(S,H_{a_i},H_{a_j}) : 1 \leq i < j \leq m+1 \}$. 
\item The data structure of Lemma~\ref{lem:range-report}.
\end{itemize} 
The size of the entire data structure is 
$O(m+n+k)$, which is $O(n + f_d(n,m))$. 

We next describe the query algorithm. Let $a$ and $b$ be real numbers 
with $a<b$. Using binary search, we compute, in 
$O(\log m) = O(n^{1+\eps} / m)$ time, 
the indices $i$ and $j$ such that $H_a$ is in the slab 
$\llbracket H_{a_{i-1}} , H_{a_i} \rrbracket$ 
and $H_b$ is in the slab 
$\llbracket H_{a_j} , H_{a_{j+1}} \rrbracket$. 

If $i=j$, then the slab $\llbracket H_a,H_b \rrbracket$ contains 
$O(n/m)$ points of $S$. In this case, we use the  
algorithm of Bentley and Shamos~\cite{bs-dcms-76} to compute the 
closest pair in $\llbracket H_a,H_b \rrbracket$ in 
$O((n/m) \log (n/m)) = O(n^{1+\eps} / m)$ time.

\begin{figure}
    \begin{center}
        \includegraphics[scale=0.75]{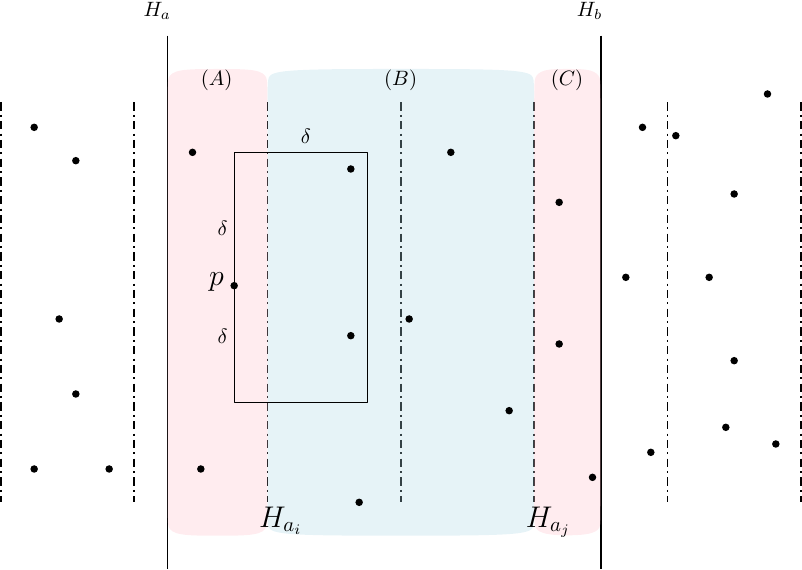}
    \end{center}
    \caption{(A), (B), and (C) are the three regions created by a query $\llbracket H_a,H_b \rrbracket$. The rectangle $R_p$ is the range query for the point $p$ with respect to the query $\llbracket H_a,H_b \rrbracket$.}
    \label{figs:regions}
\end{figure}

Assume that $i < j$. The two hyperplanes $H_{a_i}$ and $H_{a_j}$ split the query slab $\llbracket H_a, H_b \rrbracket$ into three parts $(A)$, $(B)$, and $(C)$, where 
$(A)$ is the slab $\llbracket H_a , H_{a_i} \rrbracket$, 
$(B)$ is the slab $\llbracket H_{a_i} , H_{a_j} \rrbracket$, and 
$(C)$ is the slab $\llbracket H_{a_j} , H_b \rrbracket$; refer to 
Figure~\ref{figs:regions}. 

Let $S_{AC}$ be the set of points in $S$ that are in the union of $(A)$ 
and $(C)$, and let $S_B$ be the set of points in $S$ that are in $(B)$. 
There are three possibilities for the closest pair in 
$\llbracket H_a,H_b \rrbracket$: Both endpoints are in $S_{AC}$, both 
endpoints are in $S_B$, or one endpoint is in $S_{AC}$ and the other 
endpoint is in $S_B$. 

Using the algorithm of Bentley and Shamos~\cite{bs-dcms-76}, we 
compute the closest pair distance $\delta_1$ in $S_{AC}$, in 
$O((n/m) \log (n/m)) = O(n^{1+\eps} / m)$ time. Using the data 
structure of Lemma~\ref{lemshortest}, we compute the closest pair 
distance $\delta_2$ in $S_B$ in 
$O(\log n) = O(n^{1+\eps} / m)$ time. 

Let $\delta = \min(\delta_1,\delta_2)$. In the final part of the query 
algorithm, we use the data structure of Lemma~\ref{lem:range-report}:

\begin{itemize} 
\item For each point $p$ in the region $(A)$, we compute the set of 
all points in $S$ that are in the part, say $P_p$, of the axes-parallel box 
\[ [ p_1,p_1+\delta ] \times [ p_2-\delta, p_2+\delta]
    \times \cdots \times [ p_d-\delta,p_d+\delta] 
\]
that is to the left of $H_{a_j}$. Then we compute $\delta_p$, which is 
the minimum distance between $p$ and any point in
$(S \cap P_p) \setminus \{p\}$. 
\item For each point $p$ in the region $(C)$, we compute the set of 
all points in $S$ that are in the part, say $P'_p$, of the axes-parallel box 
\[ [ p_1-\delta,p_1 ] \times [ p_2-\delta, p_2+\delta]
    \times \cdots \times [ p_d-\delta,p_d+\delta] 
\]
that is to the right of $H_{a_i}$. Then we compute $\delta_p$, which is
the minimum distance between $p$ and any point in 
$(S \cap P'_p) \setminus \{p\}$. 
\item At the end, we return the minimum of $\delta$ and 
$\min \{ \delta_p : p \in S_{AC} \}$. 
\end{itemize}
By Lemma~\ref{obs:constant-points}, the boxes $P_p$ and $P'_p$ each contain
$O(1)$ points of $S$. In total, there are $O(n/m)$ queries to the 
data structure of Lemma~\ref{lem:range-report}, and each one 
takes $O(n^{\eps})$ time. Thus, this final part of the query 
algorithm takes $O((n/m) \cdot n^{\eps}) = O(n^{1+\eps} / m)$ time. 
\end{proof}

The proof of Theorem~\ref{thm-main2} follows by taking $m = \sqrt{n}$ 
in Theorem~\ref{thm:main2} and using Theorem~\ref{thm-main1}.



\section{Future Work}
\label{future} 

The point sets that we constructed for the lower bounds on 
$f_d(n,m)$ have coordinates that are at least 
exponential 
in the number of points. Recall that the \emph{spread} 
(also known as \emph{aspect ratio})
of a point set is the ratio of the diameter and the closest pair distance. 
It is well-known that the spread of any set of $n$ points in $\R^d$ is 
$\Omega(n^{1-1/d})$. It is natural to define 
$f_d(n,m,\Phi)$ as the quantity analogous to $f_d(n,m)$, where 
we only consider sets of $n$ points in $\R^d$ having spread at most 
$\Phi$. 

\begin{prb}
    Determine the value of $f_d(n,m,\Phi)$.
\end{prb}

For any set $S$ of $n$ points in $\R^d$, where $d=2$, 
Xue \emph{et al.}~\cite{DBLP:journals/dcg/XueLRJ22} have presented a 
data structure of size $O(n \log n)$ that can be used to answer 
vertical slab closest pair queries in $O(\log n)$ time. Our data 
structure uses only $O(n)$ space and works in any constant dimension 
$d \geq 2$. However, its query time is $O(n^{1/2+\eps})$. 

\begin{prb}
    Is there a linear space data structure that supports vertical slab closest pair queries in $o(\sqrt{n})$ time, or even 
    in $O(\polylog{\!(n)})$ time?
\end{prb}


Another interesting research direction is to design linear space data 
structures for closest pair queries with other types of ranges, such 
as axes-parallel 3-sided ranges and general axes-parallel 
rectangular ranges. 

\section*{Acknowledgements}
Research on this paper was conducted, in part, at the Eleventh Annual 
Workshop on Geometry and Graphs, held at the Bellairs Research Institute 
in Barbados, March 8--15, 2024. The authors are grateful to the 
organizers and to the participants of this workshop. 
The authors thank the anonymous referees for their 
detailed comments on a preliminary version of this paper.

\bibliographystyle{plainurl}
\bibliography{cpq}

\end{document}